\newcommand{\tn}[1]{\textnormal{#1}}        
\newtheorem{theorem}{Theorem}
\newtheorem{definition}{Definition}
\newtheorem{proposition}{Proposition}
\newtheorem{example}{Example}
\newtheorem{remark}{Remark}
\newtheorem{corollary}{Corollary}
\newtheorem{procedure}{Procedure}
\newtheorem{lemma}{Lemma}
\begin{document}
%

\title{Timed Game Abstraction of Control Systems\titlenote{This work was supported by MT-LAB, a VKR Centre of Excellence.}}

\numberofauthors{2}
\author{
\alignauthor
Christoffer Sloth\\
       \affaddr{Department of Computer Science}\\
       \affaddr{Aalborg University}\\
       \affaddr{9220 Aalborg East, Denmark}\\
       \email{csloth@cs.aau.dk}
\alignauthor
Rafael Wisniewski\\
       \affaddr{Section of Automation \& Control}\\
       \affaddr{Aalborg University}\\
       \affaddr{9220 Aalborg East, Denmark}\\
       \email{raf@es.aau.dk}
}

\maketitle
\begin{abstract}
This paper proposes a method for abstracting control systems by timed game automata, and is aimed at obtaining automatic controller synthesis.

The proposed abstraction is based on partitioning the state space of a control system using positive and negative invariant sets, generated by Lyapunov functions. This partitioning ensures that the vector field of the control system is transversal to the facets of the cells, which induces some desirable properties of the abstraction. To allow a rich class of control systems to be abstracted, the update maps of the timed game automaton are extended.

Conditions on the partitioning of the state space and the control are set up to obtain sound abstractions. Finally, an example is provided to demonstrate the method applied to a control problem related to navigation.
\end{abstract}

\category{B.1.2}{Control Structures and Microprogramming}{Control Structure Performance Analysis and Design Aids}[Automatic synthesis, Formal models]
\category{F.1.1}{Computation by Abstract Devices}{Models of Computation}[Automata, Relations between models]



%
\terms{Theory}

\keywords{Abstraction, Automatic controller synthesis, Timed game, Lyapunov function}

\section{Introduction}\label{sec:introduction}
Controller design has been studied for many decades in the control community. In these studies, the primary objectives have been asymptotic stability and disturbance attenuation. This type of controller design is quite mature and can be used to synthesize controllers via, e.g., LMI-based method for linear systems. However, for nonlinear systems, design methods are limited and a considerable amount of manual labor is required in the controller design.

Controller design has also been considered in the computer science community for, e.g., discrete event systems and timed game automata. The requirements for such a system are primarily based on reachability of the system and temporal properties of the system. Especially, the timing requirements used in  computer science are very different from requirements known in control theory, as these are defined for a finite time horizon; whereas,  control theory is concerned with convergence, i.e., system properties when time goes to infinity. Fully automated tools have been developed for controller synthesis of discrete event systems and timed game automata. These are based on formal verification methods; therefore, the designed controllers are correct-by-design, i.e., the closed-loop control system is guaranteed to comply with the specification. This, in principle, eliminates the need for simulating the closed-loop control systems to perform further verification.

The goal of this paper is to abstract control systems by an automata-based model, and thereby allow automatic controller synthesis. In this way we are able to specify requirements in terms of Timed Computation Tree Logic (TCTL) specifications \cite{113766}; hence, requirements to reachability and timing can be added to the usual stability requirement.

Methods for synthesizing controllers for this kind of specification have been proposed in the computer science community for timed game automata \cite{Asarin98controllersynthesis}. For games the controller is called a strategy, and it decides among the possible choices in the game. The strategy can be automatically synthesized using tools such as UPPAAL Tiga \cite{Cassez05efficienton-the-fly}.

Methods from formal verification have already been adopted in control theory for controller design in \cite{Approximately_Bisimilar_Symbolic_Models_for_Incrementally_Stable_Switched_Systems}. Here the controller is synthesized to avoid certain unsafe states. For this purpose, a concept of approximate bisimulation, a relaxation of exact bisimulation, has been introduced.
This is further demonstrated in \cite{Fainekos07hierarchicalsynthesis}, where a robot is controlled to avoid some obstacles using a temporal logic specification. However, the generation of the models used for the synthesis procedure of these methods is based on simulating the system, which makes the method computationally demanding.

Methods for discretized models also exist, where solutions to the system equations are not utilized. One such method is presented in \cite{1582817}, where the principle of control to facet is utilized to synthesize a control strategy. 

In this paper, we abstract control systems by timed game automata with an extended update map, and use ideas similar to the bisimulation functions used in \cite{Approximately_Bisimilar_Symbolic_Models_for_Incrementally_Stable_Switched_Systems} in the abstraction procedure. However, the method does not require solutions to the system equations and is therefore not as computationally expensive. The work is an extension of the abstraction procedure presented in \cite{CDC2010}, which applies for autonomous systems.
This abstraction of dynamical systems by timed automata is based on partitioning the state space using Lyapunov functions. The intersections of sub-level sets of Lyapunov functions are used to form the cells that discretize the state space. This makes the problem of synthesizing a control strategy similar to control to facet, as the cells are generated using intersections of invariant sets. We provide a method for the design of switched controllers.
The main result of the paper is Theorem~\ref{prop:suf_cond_soundness}, which states sufficient and necessary conditions for sound abstractions by timed game automata.
Since we abstract the control systems by timed game automata with a modified update map, the synthesis procedure cannot yet be accomplished using existing tools.

This paper is organized as follows: Section~\ref{sec:preliminaries} contains preliminary definitions used throughout the paper, Section~\ref{sec:generation_of_finite_partition} explains the partitioning of the state space and the control, and Section~\ref{sec:obtaining_TA} describes how a timed game is generated from the partition. In Section~\ref{sec:partitioning_ss}, conditions for soundness are set up, Section~\ref{sec:results} provides an example, and Section~\ref{sec:conclusion} comprises conclusions.

\subsection{Notation}
The set $\{1,\dots,k\}$ is denoted $\bm{k}$. $B^{A}$
is the set of maps $A\rightarrow B$, $C(\mathds{R}^{n},\mathds{R}^{m})$ is the set of continuous maps $\mathds{R}^{n}\rightarrow \mathds{R}^{m}$. The power set of $A$ is denoted $2^{A}$. Given a vector $a\in\mathds{R}^{n}$, $a(j)$ denotes the $j^{\tn{th}}$ coordinate of $a$. Given a set $A$, the cardinality of the set is denoted $|A|$. Consider the Euclidean space $(\mathds{R}^{n},\langle,\rangle)$, where $\langle,\rangle$ is the scalar product. The state space is a connected subset $X\subset\mathds{R}^{n}$ such that there exists an open set $U$ such that $\tn{cl}(U)=X$. Whenever $f: X \to \mathds{R}$ is a function and $a \in \mathds{R}$, we write $f^{-1}(a)$ to shorten the notation of $f^{-1}(\{a\})$.

\section{Preliminaries}\label{sec:preliminaries}
The purpose of this section is to provide definitions related to control systems and timed game automata.

A control system $\Gamma=(X,U,f)$ has state space $X\subseteq\mathds{R}^{n}$, input space $U\subseteq\mathds{R}^{m}$, and dynamics described by ordinary differential equations $f:X\times U\rightarrow\mathds{R}^{n}$
\begin{align}
\dot{x}&=f(x,u).\label{eqn:general_control_system}
\end{align}
The input $u$, is controlled via a continuous map $g:X\rightarrow U$.

The system $\Gamma=(X,U,f)$ with the control $g$ is denoted $\Gamma_{g}=(X,f_{g})$, where
\begin{align}
\dot{x}&=f_{g}(x)=f(x,g(x)).\label{eqn:controlled_differential_equation}
\end{align}
We assume that $f_{g}$ is locally Lipschitz and has linear growth, then there exists a unique solution of \eqref{eqn:controlled_differential_equation} on $(-\infty,\infty)$ \cite{nonsmooth_analysis_and_control_theory}.

The solution of \eqref{eqn:controlled_differential_equation}, from an initial state $x_{0}\in X_{0}\subseteq X$ at time $t\geq0$ is described by the flow function $\phi_{\Gamma_{g}}:[0,\epsilon]\times X\rightarrow X$, $\epsilon>0$ satisfying
\begin{align}
\frac{d\phi_{\Gamma_{g}}(t,x_{0})}{dt}&=f_{g}\left(\phi_{\Gamma_{g}}(t,x_{0})\right)\label{eqn:solution_of_auto_differential_equation}
\end{align}
for all $t\geq0$.

Lyapunov functions are utilized in stability theory and are defined in the following \cite{Energy_Functions_for_Morse_Smale_Systems}.
\begin{definition}[Lyapunov Function]\label{def:lyapunov_function}
Let $X$ be an \\open connected subset of $\mathds{R}^{n}$. Suppose $f_{g}:X\rightarrow\mathds{R}^{n}$ is continuous and let $\tn{Cr}(f_{g})$ be the set of critical points of $f_{g}$.
Then a real non-degenerate differentiable function $\varphi:X\rightarrow\mathds{R}$ is said to be a Lyapunov function for $f_{g}$ if
\begin{subequations}
\begin{align}
\notag&p\tn{ is a critical point of }f_{g}\Leftrightarrow p\tn{ is a critical point of }\varphi\\
&\dot{\varphi}_{g}(x)\equiv\sum_{j=1}^{n}\frac{\partial \varphi}{\partial x_{j}}(x)f_{g}^{j}(x)\label{eqn:Lyap_der}\\
&\dot{\varphi}_{g}(x)=0\,\,\,\,\,\,\,\,\forall x\in\tn{Cr}(f_{g})\\
&\dot{\varphi}_{g}(x)\neq0\,\,\,\,\,\,\,\,\forall x\in X\backslash\tn{Cr}(f_{g})
\end{align}
\end{subequations}
and there exists $\alpha>0$ and an open neighborhood of each critical point $p\in\tn{Cr}(f_{g})$, where
\begin{align}
&||\dot{\varphi}_{g}(x)||\geq\alpha||x-p||^{2}.
\end{align}
\end{definition}
\begin{remark}
We only require the vector field to be transversal to the level curves of a Lyapunov function $\varphi$, i.e., $\dot{\varphi}_{g}(x)=\langle\nabla \varphi_{g} (x), f_{g}(x)\rangle\neq0$ for all $x\in X\backslash\tn{Cr}(f_{g})$, and does not use Lyapunov functions in the usual sense, where the existence of a Lyapunov function implies stability, but uses a more general notion from \cite{Energy_Functions_for_Morse_Smale_Systems}.
\end{remark}
To simplify the notation, we use subscript $g$ on $\dot{\varphi}_{g}$ to indicate that the control $g$ is applied in the calculation of $\dot{\varphi}_{g}$.

\begin{definition}[Reachable set of Dyn. System]$ $\\
The reachable set of a dynamical system $\Gamma_{g}$ from a set of initial states $X_{0}\subseteq X$ on the time interval $[t_{1},t_{2}]$ is defined as
\begin{align}
\notag\tn{Reach}_{[t_{1},t_{2}]}(\Gamma_{g},X_{0})\equiv\{&x\in X|\exists t\in[t_{1},t_{2}]\tn{, }\exists x_{0}\in X_{0}\\
&\tn{such that }x =\phi_{\Gamma_{g}}(t,x_{0})\}.
\end{align}
\end{definition}

The control system will be abstracted by a timed game automaton, which is an extension of a timed automaton \cite{Verification_Performance_Analysis_and_Controller_Synthesis_for_RealTime_Systems}. In the definition of a timed automaton, a set of diagonal-free clock constraints $\Psi(C)$ is used for the set $C$ of clocks. $\Psi(C)$ is defined as the set of constraints $\psi$ described by the following grammar: 
\begin{align}
&\psi::=c\bm{\bowtie} k|\psi_{1}\bm{\wedge}\psi_{2}\tn{, where}\label{eqn:clock_grammar}\\
&\notag c\in C,\, k\in\mathds{R}_{\geq0},\tn{ and }\bm{\bowtie}\in\{\bm{\leq},\bm{<},\bm{=},\bm{>},\bm{\geq}\}.
\end{align}
Note that the clock constraint $k$ should usually be an integer, but in this paper no effort is done to convert the clock constraints into integers. Furthermore, the elements of $\bm{\bowtie}$ are bold to indicate that they are syntactic operations.
\begin{definition}[Timed Automaton]
A timed automaton, $\mathcal{A}$, is a tuple $(E, E_{0}, C, \Sigma, I, \Delta)$, where
\begin{itemize}
\item $E$ is a finite set of locations, and $E_{0}\subseteq E$ is the set of initial locations.
\item $C$ is a finite set of clocks.
\item $\Sigma$ is the set of actions.
\item $I:E\rightarrow\Psi(C)$ assigns invariants to locations.
\item $\Delta\subseteq E\times\Psi(C)\times\Sigma\times2^{C}\times E$ is a finite set of transition relations. The transition relations provide edges between locations as tuples $(e,G_{e\rightarrow e'},\sigma,R_{e\rightarrow e'},e')$, where $e$ is the source location, $e'$ is the destination location, $G_{e\rightarrow e'}\in\Psi(C)$ is the guard set, $\sigma$ is an action in $\Sigma$, and $R_{e\rightarrow e'}\in2^{C}$ is the set of clocks to be reset.
\end{itemize}
\end{definition}
The semantics of a timed automaton is defined in the following, adopting the notion of \cite{Verification_Performance_Analysis_and_Controller_Synthesis_for_RealTime_Systems}.
\begin{definition}[Clock Valuation]
A clock valuation on a set of clocks $C$ is a mapping $v:C\rightarrow\mathds{R}_{\geq0}$. The initial valuation $v_{0}$ is given by $v_{0}(c)=0$ for all $c\in C$. For a valuation $v$, $t\in\mathds{R}_{\geq0}$, and $R\subseteq C$, the valuations $v+t$ and $v[R]$ are defined as follows
\begin{subequations}
\begin{align}
(v+t)(c) &= v(c)+t,\label{eqn:clock_valuation_delay}\\
v[R](c) &= \begin{cases}0&\tn{for }c\in R,\\v(c)&\tn{otherwise.}\end{cases}\label{eqn:clock_valuation_reset}
\end{align}
\end{subequations}
\end{definition}
We see that \eqref{eqn:clock_valuation_delay} is used to progress time and that \eqref{eqn:clock_valuation_reset} is used to reset the clocks in $R$ to zero.
\begin{definition}[Semantics of Clock Constraint]$ $\\
A clock constraint in $\Psi(C)$ is a set of clock valuations $\{v:C\rightarrow\mathds{R}_{\geq0}\}$ given by
\begin{subequations}
\begin{align}
\llbracket c\bm{\bowtie} k\rrbracket &= \{v:C\rightarrow\mathds{R}_{\geq0}|v(c)\bowtie k\}\\
\llbracket\psi_{1}\bm{\wedge}\psi_{2}\rrbracket &= \llbracket\psi_{1}\rrbracket\cap\llbracket\psi_{2}\rrbracket.
\end{align}
\end{subequations}
\end{definition}
For convenience we denote $v\in\llbracket\psi\rrbracket$ by $v\models\psi$.

\begin{definition}[Semantics of Timed Automaton]$ $\\
The semantics of a timed automaton $\mathcal{A}=(E, E_{0}, C, \Sigma, I, \Delta)$ is a transition system $\llbracket \mathcal{A}\rrbracket=(S,S_{0},\Sigma\cup\mathds{R}_{\geq0},T_{\sigma}\cup T_{\tn{t}})$, where
\begin{align}
\notag S =\, &\{(e,v)\in E\times\mathds{R}^{C}_{\geq0}|v\models I(e)\}\\
\notag S_{0} =\, &\{(e,v)\in E_{0}\times v_{0}\}\\
\notag T_{\sigma} =\, &\{(e,v)\overset{\sigma}{\rightarrow}(e',v')|\exists (e,G_{e\rightarrow e'},\sigma,R_{e\rightarrow e'},e')\in\Delta :\\\notag&v\models G_{e\rightarrow e'} \tn{, }v'=v[R_{e\rightarrow e'}]\}\\
\notag T_{\tn{t}} =\, &\{(e,v)\overset{t}{\rightarrow}(e,v+t)|\forall t'\in[0,t] : v+t'\models I(e)\}.
\end{align}
\end{definition}

Analog to the solution of \eqref{eqn:controlled_differential_equation}, shown in \eqref{eqn:solution_of_auto_differential_equation}, is a run of a timed automaton.
\begin{definition}[Run of Timed Automaton]\label{def:run_of_timed_automaton}
A run \\of a timed automaton $\mathcal{A}$ is a possibly infinite sequence of alternations between time steps and discrete steps on the following form
\begin{align}
\varrho_{\mathcal{A}}:
(e_{0},v_{0})\overset{t_{1}}{\longrightarrow}(e_{0},v_{1})\overset{\sigma_{1}}{\longrightarrow}(e_{1},v_{2})\overset{t_{2}}{\longrightarrow}\dots
\end{align}
which is a path in $\llbracket\mathcal{A}\rrbracket$, where $t_{i}\in \mathds{R}_{\geq0}$ and $\sigma_{i}\in \Sigma$.
The multifunction describing the runs of a timed automaton $\phi_{\mathcal{A}}:\mathds{R}_{\geq0}\times E_{0}\rightarrow 2^{E}$, is defined by $e\in\phi_{\mathcal{A}}(t,e_{0})$ if and only if there exists a path in $\llbracket\mathcal{A}\rrbracket$ initialized in $(e_{0},v_{0})$ that reaches the location $e$ at time $t=\sum_{i} t_{i}$.
\end{definition}

From the run of a timed automaton, the reachable set is defined below.
\begin{definition}[Reachable set of Timed Auto.]$ $\\
The reachable set of a timed automaton $\mathcal{A}$, with initial locations $E_{0}$, in the time interval $[t_{1},t_{2}]$ is defined as
\begin{align}
\notag\tn{Reach}_{[t_{1},t_{2}]}(\mathcal{A},E_{0})\equiv\{&e\in E|\exists t\in[t_{1},t_{2}],\exists e_{0}\in E_{0}\\
&\tn{such that }e \in\phi_{\mathcal{A}}(t,e_{0})\}.
\end{align}
\end{definition}

A timed game automaton is closely related to a timed automaton as shown in the following \cite{A_Game-Theoretic_Approach_to_Real-Time_System_Testing}.
\begin{definition}[Timed Game Automaton]\label{def:timed_game_automaton}
A timed \\game automaton  is a tuple $\mathcal{G}=(E, E_{0}, C, \Sigma_{\tn{c}},\Sigma_{\tn{u}}, I, \Delta)$, \\where the tuple $(E, E_{0}, C, \Sigma_{\tn{c}}\cup\Sigma_{\tn{u}}, I, \Delta)$ is a timed automaton. The set $\Sigma_{\tn{c}}$ contains controllable actions (system inputs) and the set $\Sigma_{\tn{u}}$ contains uncontrollable actions (system outputs).
\end{definition}
Actions that can be affected by a strategy, see Definition~\ref{def:strategy}, are controllable actions. However, transitions labeled with uncontrollable actions can happen whenever an adversary (the environment) chooses to take them and the associated guards are satisfied. In this setting, the input actions are equivalent to changing the control $g(x)$, and the output actions are observations, i.e., they resemble information about the state of the system.


\begin{example}
Consider the timed game automaton shown in Figure~\ref{fig:TGA} having four locations and one clock. It is initialized in location $e_{1}$ and the aim is to reach location $e_{3}$. Initially, an uncontrollable action can enable a transition to $e_{2}$, when the guard $c\bm{\geq}1$ is satisfied. After reaching $e_{2}$, the game is guaranteed to reach location $e_{3}$ if the controllable action $\sigma_{\tn{c}}$ happens when $c\bm{<}1$. Otherwise a transition to $e_{4}$ may occur.
\begin{figure}[!htb]
    \centering
       \includegraphics[scale=.9]{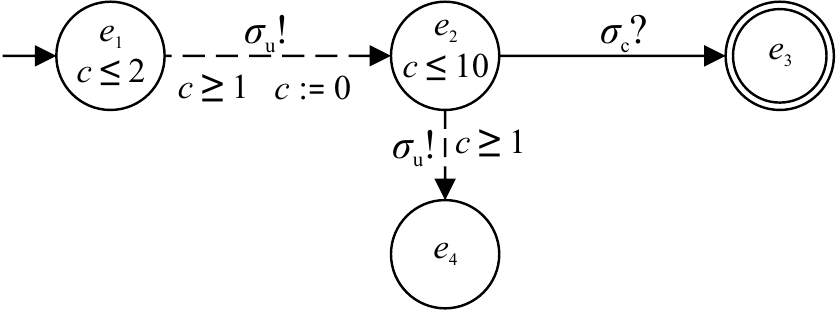}
    \caption{Timed game automaton with four locations, where the solid line represents a transition with a controllable action and the dashed lines represent transitions with uncontrollable actions.}
    \label{fig:TGA}
\end{figure}
\end{example}
Analog to the controller $g(x)$, a strategy is defined in the following. Let us first define continuation of a run.
\begin{definition}[Continuation of a Run]
Let $\varrho=$ \\$(e_{0},v_{0})\rightarrow\dots\rightarrow(e_{k},v_{k})$ be a run.
Then a continuation of the run $\varrho$ is the run $\varrho'=(e_{0},v_{0})\rightarrow\dots\rightarrow(e_{k},v_{k})\rightarrow(e_{k+1},v_{k+1})$, denoted $\varrho'=\varrho \to (e_{k+1}, v_{k+1})$.
\end{definition}

\begin{definition}[Strategy]\label{def:strategy}
Let $\mathcal{G}=(E, E_{0}, C, \Sigma_{\tn{c}},$\\$\Sigma_{\tn{u}}, I, \Delta)$ be a timed game automaton. Then any map
$\kappa:S^{+}\rightarrow\Sigma_{c}\cup\{\delta\}$, where $S^{+}$ is the set of all runs and $\delta\notin\Sigma_{\tn{c}}\cup\Sigma_{\tn{u}}$, which  satisfies the following two conditions:
\begin{enumerate}
\item if $\kappa(\varrho)=\delta$, then $\varrho\overset{t}{\rightarrow}(e_{k},v_{k}+t)$ is a path in $\llbracket\mathcal{G}\rrbracket$ for some $t>0$, and
\item if $\kappa(\varrho)=\sigma$, then $\varrho\overset{\sigma}{\rightarrow}(e_{k}',v_{k}')$ is a path in $\llbracket\mathcal{G}\rrbracket$,
\end{enumerate}
for any run $\varrho=(e_{0},v_{0})\rightarrow\dots\rightarrow(e_{k},v_{k})$, is called a strategy.
\end{definition}
We see that the controller can either do nothing ($\kappa(\varrho)=\delta$) or execute a controllable action ($\kappa(\varrho)=\sigma$). Furthermore, we see that a strategy may depend on the entire past of the run. However, we are only interested in so-called memoryless strategies, i.e., strategies that only depend on the current state of the timed game automaton. This implies that if $\varrho=(e_{0},v_{0})\overset{t_{1}}{\rightarrow}(e_{1},v_{1})\rightarrow\dots\rightarrow(e_{k},v_{k})$ and $\varrho'=(e_{0},v_{0})\overset{t_{1}'}{\rightarrow}(e_{1}',v_{1}')\rightarrow\dots\rightarrow(e_{k},v_{k})$, then $\kappa(\varrho)=\kappa(\varrho')$. In addition to being memoryless, we require the strategy to be independent of the clock valuations, i.e., if $\varrho=(e_{0},v_{0})\overset{t_{1}}{\rightarrow}\dots\rightarrow(e_{k},v_{k})$ and $\varrho'=(e_{0},v_{0})\overset{t_{1}'}{\rightarrow}(e_{1}',v_{1}')\rightarrow\dots\rightarrow(e_{k},v_{k}')$, then $\kappa(\varrho)=\kappa(\varrho')$. A closed-loop system satisfying these properties can be implemented as a parallel composition of a timed automaton and an automaton.

\begin{definition}\label{def:closed_loop_game}
Consider the timed game automaton $\mathcal{G}$ and the strategy $\kappa$, which only depends on the locations. Then $\mathcal{G}_{\kappa}$ is the timed automaton controlled using the strategy $\kappa$.
\end{definition}

\section{Generation of Finite Partition}\label{sec:generation_of_finite_partition}
The presented abstraction is based on partitioning the state space and the control of $\Gamma=(X,U,f)$. The partitioning of the state space is inspired by the partitioning presented in \cite{CDC2010}. However, in contrast to \cite{CDC2010}, the considered system has an unknown control of $u$, as stated in \eqref{eqn:general_control_system}.

It is proposed to partition the state space by intersecting slices defined as the set-difference of positive and negative invariant sets. This implies that the partition should be conducted such that for each admissible control $g^{i}(x)$ the vector field of the controlled system $\Gamma_{g^{i}}=(X,f_{g^{i}})$ is transversal to the boundaries of the slices. The admissible controls are defined to be the finite set $K_{U}=\{g^{i}(x)|i\in\Lambda\}$, where $\Lambda$ is some index set. For simplicity, the controls in $K_{U}$ have domain $X$. However, this simplification can easily be relaxed.

\begin{definition}[Slice]\label{def:slice}
A nonempty set $S$ is a slice if it is a union of cells and there exist two sets $A$ and $B$ that are positive or negative invariant such that
\begin{enumerate}
\item $B$ is a proper subset of $A$, i.e., $B\subset A$.
\item Given any $g\in K_{U}$, $A$ and $B$ are either positive or negative invariant sets for  system $\Gamma_{g}=(X,f_{g})$, and
\item $S=\tn{cl}(A\backslash B)$.
\end{enumerate}
\end{definition}
From this definition, we see that for a given partition, only controls that make the vector field of the closed-loop system transversal to the boundaries of the slices are allowed.

To devise a partition of a state space, we need to define collections of slices, called slice-families.
\begin{definition}[Slice-Family]
A slice-family $\mathcal{S}$ is a collection of slices generated by the positive and negative invariant open sets $A_{1}\subset A_{2}\subset\dots\subset A_{k}$ covering the entire state space of $\Gamma$, thereby $S_{1}=A_{1}$, $S_{2}=\tn{cl}(A_{2}\backslash A_{1}),\dots,S_{k}=\tn{cl}(A_{k}\backslash A_{k-1})$ and $X\subseteq A_{k}$. For convenience $|\mathcal{S}|$ is defined to be the cardinality of the slice-family $\mathcal{S}$, thus $\mathcal{S}=\{S_{1},\dots,$\\$S_{|\mathcal{S}|}\}$. Furthermore, we say that $\mathcal{S}$ is generated by the sets $\{A_{i}|i\in\bm{k}\}$.
\end{definition}

A function is associated to each slice-family $\mathcal{S}$, to provide an easy way of describing the boundary of a slice. Such functions are called partitioning functions.
\begin{definition}[Partitioning Function]\label{def:part_fun}
Let $\mathcal{S}$ be a slice-family, then a continuous function $\varphi:\mathds{R}^{n}\rightarrow\mathds{R}$ smooth on $\mathds{R}^{n}\backslash\{0\}$ is a partitioning function associated to $\mathcal{S}$ if for any set $A_{i}$ generating $\mathcal{S}$ there exists $a_{i},a_{i}'\in\mathds{R}\cup\{-\infty,\infty\}$ such that
\begin{align}
\varphi^{-1}([a_{i}',a_{i}])=A_{i}
\end{align}
and $a_{i},a_{i}'$ are regular values of $\varphi$. By regular level set theorem, the boundary $\varphi^{-1}(a_{i})$ of $A_{i}$ is a smooth manifold \cite{An_Introduction_to_Manifolds}.
\end{definition}
In the remainder of the paper, we associate to each slice-family $\mathcal{S}^{i}$ a partitioning function $\varphi^{i}$.

The partition of the state space is associated with cells that are generated by intersecting slices.
\begin{definition}
We say that slices $S_{1}$ and $S_{2}$ intersect each other transversally and write
\begin{align}
S_{1}\pitchfork S_{2}=S_{1}\cap S_{2},
\end{align}
if their boundaries, $\tn{bd}(S_{1})$ and $\tn{bd}(S_{2})$, intersect each other transversally.
\end{definition}

\begin{definition}[Extended Cell]\label{def:extended_cell}
Let $\{\mathcal{S}^{i}|i\in\bm{k}\}$ be a collection of $k$ slice-families and define $\mathcal{Y}=\{1,\dots,|\mathcal{S}^{1}|\}\times\dots\times\{1,\dots, |\mathcal{S}^{k}|\}$. Denote the $j^{\tn{th}}$ slice in $\mathcal{S}^{i}$ by $S^{i}_{j}$ and let $y\in\mathcal{Y}$. Then
\begin{align}
e_{\tn{ex},y}&\equiv\pitchfork_{i=1}^{k} S^{i}_{y_{i}}.\label{eqn:extended_cell_def}
\end{align}
Any nonempty set $e_{\tn{ex},y}$ will be called an extended cell. These cells are denoted extended cells, since the transversal intersection of slices may form multiple disjoint sets.
\end{definition}

\begin{definition}[Cell]\label{def:cell}
A cell is a connected component of an extended cell
\begin{subequations}
\begin{align}
\bigcup_{z} e_{(y,z)} &= e_{\tn{ex},y}\tn{, where }\\
e_{(y,z)}\cap e_{(y,z')}&=\emptyset\,\,\,\,\,\,\,\,\forall z\neq z'.
\end{align}
\end{subequations}
\end{definition}
We say that the slices $S^{1}_{y_{1}},\dots,S^{k}_{y_{k}}$ generate the cell. In the remainder of the paper, we denote the slice from the $i^{\tn{th}}$ slice-family generating $e$ by $S_{e}^{i}$.

\begin{proposition}[Proof in \cite{CDC2010_proofs}]
If $S_{1}\pitchfork S_{2}\neq\emptyset$ then
\begin{align}
&\tn{int}(S_{1}\cap S_{2})\neq\emptyset.
\end{align}
\end{proposition}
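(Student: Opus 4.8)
The plan is to reduce the statement to a local model near a point where the two boundaries cross, and then to use that transversality makes the pair $(\varphi_1,\varphi_2)$ a submersion, hence an open map, there. First I would unpack the hypothesis: $S_1\pitchfork S_2\neq\emptyset$ means $S_1\cap S_2\neq\emptyset$ while $\tn{bd}(S_1)$ and $\tn{bd}(S_2)$ meet transversally. If some point of $S_1\cap S_2$ already lies in $\tn{int}(S_1)\cap\tn{int}(S_2)$ the claim is immediate, so the real content is the case in which the intersection is witnessed only on the boundary, by a point $p\in\tn{bd}(S_1)\cap\tn{bd}(S_2)$.

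Next I would pin down the local shape of each slice at $p$. Writing $S_i=\tn{cl}(A_i\backslash B_i)$ with $A_i=\varphi_i^{-1}([a_i',a_i])$, the boundary of $S_i$ is contained in the pairwise-disjoint level sets of $\varphi_i$ corresponding to the endpoints of its two defining intervals. Because these level sets are disjoint, $p$ lies on exactly one of them, say $\varphi_i^{-1}(c_i)$, and on a small neighbourhood of $p$ the slice $S_i$ agrees with a one-sided region; after possibly replacing $\varphi_i$ by $-\varphi_i$ I may assume $S_i=\{\varphi_i\geq c_i\}$ locally. Since $c_i$ is a regular value, $\nabla\varphi_i(p)\neq0$, and in particular $p\neq0$, so both partitioning functions are smooth near $p$.

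Then I would bring in transversality. The tangent spaces $T_p\varphi_i^{-1}(c_i)$ are hyperplanes with normals $\nabla\varphi_i(p)$, and their sum is all of $\mathds{R}^n$ precisely when the two normals are linearly independent; thus transversality of the boundaries at $p$ is exactly the statement that $F=(\varphi_1,\varphi_2):\mathds{R}^n\to\mathds{R}^2$ has $DF(p)$ of rank $2$, i.e. $F$ is a submersion at $p$. By the submersion theorem $F$ is open on a neighbourhood $V$ of $p$, so $F(V)$ is open in $\mathds{R}^2$ and contains $(c_1,c_2)$. The open quadrant $\{(s,t)\mid s>c_1,\ t>c_2\}$ has $(c_1,c_2)$ on its boundary, so it meets the open set $F(V)$; the preimage of this nonempty open intersection, restricted to $V$, is nonempty and open, and every one of its points satisfies $\varphi_1>c_1$ and $\varphi_2>c_2$, hence lies in $\tn{int}(S_1)\cap\tn{int}(S_2)\subseteq\tn{int}(S_1\cap S_2)$. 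This yields $\tn{int}(S_1\cap S_2)\neq\emptyset$.

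I expect the main obstacle to be the bookkeeping of the second step: checking that near $p$ each slice really is a single-inequality region rather than a corner at which two of its boundary pieces meet. This is exactly where the structure of slices as closures of differences of nested positive/negative invariant sets is used, together with the fact that distinct regular values of one partitioning function have disjoint level sets; without this the local model could be a wedge and the quadrant argument would have to be refined accordingly.
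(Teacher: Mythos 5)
The paper does not actually contain a proof of this proposition --- it is deferred to the external reference \cite{CDC2010_proofs} --- so there is nothing in this document to compare your argument against step by step; I can only assess it on its own terms, and on those terms it is correct and is the natural argument. Transversality of the two boundary hypersurfaces at a common point $p$ is precisely linear independence of $\nabla\varphi_{1}(p)$ and $\nabla\varphi_{2}(p)$, which makes $F=(\varphi_{1},\varphi_{2})$ a submersion, hence an open map, near $p$; openness of $F(V)$ around $(c_{1},c_{2})$ then produces points with $\varphi_{1}>c_{1}$ and $\varphi_{2}>c_{2}$ simultaneously, and these lie in $\tn{int}(S_{1})\cap\tn{int}(S_{2})\subseteq\tn{int}(S_{1}\cap S_{2})$. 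Your closing remark also correctly identifies and disposes of the only structural worry, namely that a slice might have a corner at $p$: the endpoint level sets of one partitioning function are disjoint, so locally each slice is a single one-sided region. The one loose end is your opening dichotomy. A point of $S_{1}\cap S_{2}$ may lie in $\tn{int}(S_{1})\cap\tn{bd}(S_{2})$ while $\tn{bd}(S_{1})\cap\tn{bd}(S_{2})=\emptyset$ (transversality then holds vacuously), and this situation is covered neither by your ``immediate'' case nor by your common-boundary-point case. It is, however, settled by a one-function degenerate version of the same argument: since $c_{2}$ is a regular value, $\nabla\varphi_{2}(p)\neq0$, so one can perturb $p$ into $\{\varphi_{2}>c_{2}\}$ while staying inside the open set $\tn{int}(S_{1})$, again landing in $\tn{int}(S_{1})\cap\tn{int}(S_{2})$. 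With that sentence added, the proof is complete.
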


A finite partition of the state space based on the transversal intersection of slices is defined in the following.
\begin{definition}[Finite Partition of State Space]\label{def:finite_partition}
Let $\mathcal{S}$ be a collection of slice-families, $\mathcal{S}=\{\mathcal{S}^{i}|i\in\bm{k}\}$. Then the finite partition $K_{X}(\mathcal{S})$ is defined to be the collection of all cells generated by $\mathcal{S}$ according to Definition~\ref{def:cell}.
\end{definition}
Finally, the product of $K_{X}(\mathcal{S})$ and $K_{U}$ is defined to be the partition of $\Gamma$ given by $K(\mathcal{S})\equiv K_{X}(\mathcal{S})\times K_{U}$.

The following example clarifies how the partitioning is conducted.
\begin{example}\label{example:TGA_partition_example}
Consider the one-dimensional system with one control input and state space $X=[-3,3]\subset\mathds{R}$
\begin{align}
\dot{x}=-x+u.\label{eqn:partition_example}
\end{align}
The partition of its state space is conducted using the sets $A_{1}=[-1,1]$ and $A_{2}=[-3,3]$, i.e., we obtain the cells: $[-3,-1]$, $[-1,1]$, $[1,3]$. The control of the system should also be partitioned, which is chosen to $K_{U}=\{0,1.5,2x\}$. Figure~\ref{fig:partition_example} shows a partition associated with \eqref{eqn:partition_example}.
\begin{figure}[!htb]
    \centering
       \includegraphics[scale=1]{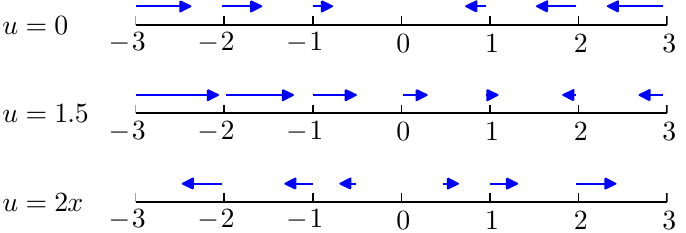}
    \caption{Vector field of \eqref{eqn:partition_example}, illustrated with blue arrows, with three different controls applied to the system.}
    \label{fig:partition_example}
\end{figure}

From the figure it is seen that the direction of the vector field can be reversed completely or partly according to the input applied to the system. Furthermore, the two sets $A_{1}=[-1,1]$ and $A_{2}=[-3,3]$ are positive or negative invariant sets for all controls in $K_{U}$.
\end{example}

\section{Generation of Timed Game from Finite Partition}\label{sec:obtaining_TA}
To abstract a control system $\Gamma$ by a timed game automaton $\mathcal{G}$, we modify the abstraction procedure presented in \cite{CDC2010}, by adding the distinction between controllable and uncontrollable actions according to Definition~\ref{def:timed_game_automaton}. Furthermore, we extend the expressiveness of the update maps of the timed game, to allow more accurate abstractions. First, an example is provided to illustrate the principle of the abstraction, then the abstraction procedure is presented.

The possibility to change the control input increases the number of locations and adds the possibility to influence the trajectories of the system, as shown in the following example.
\begin{example}\label{example:TGA_ctrl_reset_example}
$ $ $ $ Consider a one-dimensional system from \\Example~\ref{example:TGA_partition_example}, where $X=[-3,3]$ and
\begin{align}
\dot{x}=-x+u.
\end{align}
The state space $X$ is partitioned into three cells, i.e., $K_{X}=\{[-3,-1],[-1,1],[1,3]\}$ and the controls are in $K_{U}=\{0,1.5,$\\$2x\}$. All controls can be applied in all cells, i.e., the generated game has 9 locations ($K_{X}\times K_{U}$).
A timed game abstracting this system is illustrated in Figure~\ref{fig:TGA_ctrl_reset_example}, where the execution of the controllable action $\sigma_{\tn{c}}^{i}$ resembles applying the control $u=g^{i}(x)$ in the dynamical system.
\begin{figure}[!htb]
    \centering
       \includegraphics[width=\linewidth]{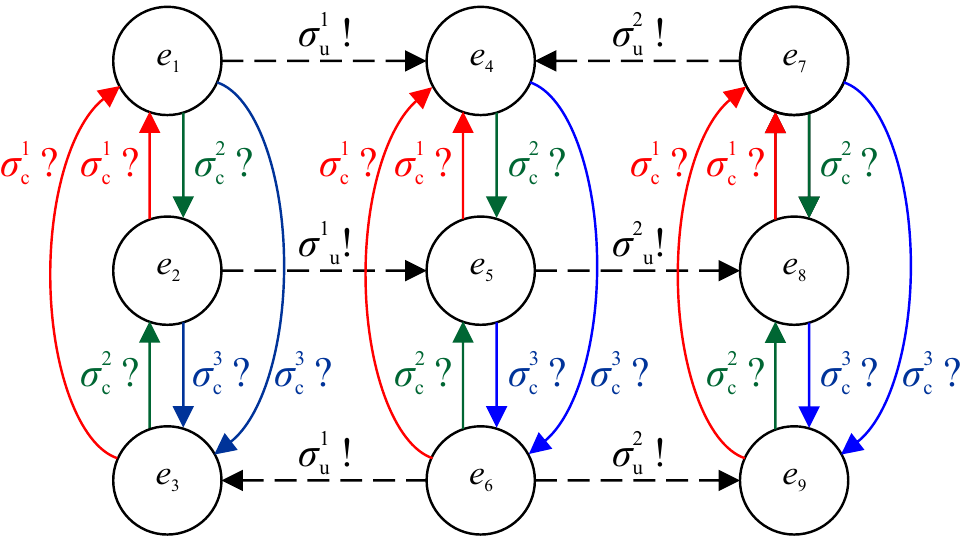}
    \caption{Illustration of a timed game automaton, where three different controls can be applied to the system.}
    \label{fig:TGA_ctrl_reset_example}
\end{figure}
\end{example}

Before providing the procedure for generating a timed game automaton, the clock valuation is redefined, as a more expressive update map is used in order to abstract the systems. Furthermore, we define the clock valuation on pairs of clocks as the abstraction uses pairs of clocks to monitor the fastest respectively slowest progress in each direction.
\begin{definition}[Extended Clock Valuation]$ $\\
Denote the pair of clocks $(c_{1}^{i},c_{2}^{i})$ by $c^{i}$, and let $C=\{c^{i}|i\in\bm{k}\}$. Then a clock valuation on a set of clocks $C$ is a mapping $v:C\rightarrow\mathds{R}^{2}_{\geq0}$. The initial valuation $v_{0}$ is given by $v_{0}(c)=(0,0)$ for all $c\in C$. For a valuation $v$, $t\in\mathds{R}_{\geq0}$, and $R\subseteq C$, the valuations $v+t$ and $v[R]$ on a pair of clocks $c$ are
\begin{subequations}
\begin{align}
(v+t)(c) &= v(c)+\begin{pmatrix}1\\1\end{pmatrix}t,\label{eqn:clock_valuation_delay_extended}\\
v[R](c) &= \alpha+\beta v(c)\tn{ for }c\in R\label{eqn:clock_valuation_update_extended}
\end{align}
\end{subequations}
where $\alpha\in\mathds{R}^{2}$ and $\beta$ is a $2\times2$ matrix with rational entries.
\end{definition}
Note that resetting a pair of clocks is just a special case of \eqref{eqn:clock_valuation_update_extended}, where $\alpha=(0,0)$ and all entries of $\beta$ are zero.
Furthermore, the valuation of one clock, denote $v(c_{1}^{i})$ or $v(c_{2}^{i})$, is also used in the following, when appropriate.

In the following, a procedure is presented for generating a timed game automaton from a finite partition. 
\begin{procedure}\label{procedure:generation_of_TGA}
Given a partition $K(\mathcal{S})$, the timed game automaton $\mathcal{G}=(E, E_{0}, C, \Sigma_{\tn{c}},\Sigma_{\tn{u}}, I, \Delta)$ is generated using the following
\begin{itemize}
\item \textbf{Locations:} Let the locations of $\mathcal{G}$ be given by
\begin{align}
E=K_{X}(\mathcal{S})\times K_{U}.\label{eqn:TA_locations}
\end{align}
We use the notation $e_{(y,z,j)} \equiv (e_{(y,z)},g^j) \in E$.
This means that a location $e_{(y,z,j)}$ is associated with the cell $e_{(y,z)}$ of the partition $K_{X}(\mathcal{S})$ and the control $g^{j}(x)$ in $K_{U}$.
\item \textbf{Clocks:} Given $k$ slice-families, the number of clocks is $2k$, i.e., $C=\{c^{i}|i\in\bm{k}\}$. The pair of clocks $c^{i}=(c^{i}_{1},c^{i}_{2})$ monitors the maximum and minimum time for being in slices of the slice-family $\mathcal{S}^{i}$.
\item \textbf{Invariants:} In each location $e\in E$, there are up to $k$ invariants. The invariants for location $e_{(y,z,j)}$ specify upper bounds on the time for staying in the $k$ slices generating the cell $e_{(y,z)}$ with a control $g^{j}$ applied in $e_{(y,z)}$. We say that a cell $e_{(y,z)}$ is generated by the slices $\{S^{i}_{y_{i}}|i\in\bm{k}\}$ and in addition we say that a location $e_{(y,z,j)}$ is generated by $\{S^{i}_{(y_{i},g_{j})}|i\in\bm{k}\}$ and use the shorthand notation $S^{i}_{e}\equiv S^{i}_{(y_{i},g_{j})}$ for convenience. We impose an invariant whenever there is an upper bound on the time for staying in a slice generating the cell $e$
\begin{align}
I(e)&= \bigwedge_{i=1}^{k}c^{i}_{1}\bm{\leq} \overline{t}_{S^{i}_{e}}\label{eqn:TGA_invariants}
\end{align}
where $\overline{t}_{S^{i}_{e}}\in\mathds{R}_{\geq0}$ is an upper bound on the time for staying in $S^{i}_{e}$.
\item \textbf{Controllable Actions:} The controllable actions $\Sigma_{\tn{c}}$ are actions $\sigma_{\tn{c}}^{1},\dots,\sigma_{\tn{c}}^{|K_{U}|}$, where $\sigma_{\tn{c}}^{j}$ is associated with applying the control $g^{j}(x)$ to the dynamical system.
\item \textbf{Uncontrollable Actions:}
The uncontrollable actions $\Sigma_{\tn{u}}$ are actions $\sigma_{\tn{u}}^{1},\dots,\sigma_{\tn{u}}^{k}$, where $\sigma_{\tn{u}}^{i}$ is associated with transitions between slices of the $i^{\tn{th}}$ slice-family $\mathcal{S}^{i}=\{S^{i}_{1},\dots,S^{i}_{|\mathcal{S}^{i}|}\}$.
\item \textbf{Transition relations:} If a pair of locations, $e$ and $e'$, where the control $g(x)$ is applied in both $e$ and $e'$, satisfy the following two conditions
\begin{enumerate}
\item $e$ and $e'$ are adjacent cells in the state space, i.e., $e\cap e'\neq\emptyset$, with $S^{i}_{e}\neq S^{i}_{e'}$ for some $i\in\bm{k}$. Hence, $e$ and $e'$ are generated from different slices in $\mathcal{S}^{i}$, and
\item $\varphi^{i}(x')\leq\varphi^{i}(x)\,\forall x\in e$ and $\forall x'\in e'$ and $\dot{\varphi}^{i}_{g}(x)<0$ $\forall x\in e\cap e'$ or $\varphi^{i}(x')\geq\varphi^{i}(x)\,\forall x\in e$ and $\forall x'\in e'$ and $\dot{\varphi}^{i}_{g}(x)>0$ $\forall x\in e\cap e'$.
\end{enumerate}
Then there is a transition relation
\begin{subequations}
\begin{align}
&\delta_{e\rightarrow e'} = (e, G_{e\rightarrow e'}, \sigma_{\tn{u}}^{i}, R_{\tn{u},e\rightarrow e'}, e')
\end{align}
where
\begin{align}
&G_{e\rightarrow e'}=c^{i}_{2}\bm{\geq} \underline{t}_{S^{i}_{e}}\label{eqn:TGA_guard_uncontrol}\\
&R_{\tn{u},e\rightarrow e'}=c^{i}\label{eqn:TGA_reset_uncontrol}
\end{align}
\end{subequations}
and $\underline{t}_{S^{i}_{e}}\in\mathds{R}_{\geq0}$ is a lower bound on the time for staying in $S^{i}_{e}$ and $v[R_{\tn{u},e\rightarrow e'}]$ is defined in  \eqref{eqn:clock_valuation_update_extended} with $\alpha=(0,0)$ and all entries of $\beta$ equal to zero. Note that $\sigma_{\tn{u}}^{i}$ is the action on the transition $\delta_{e\rightarrow e'}$, as $e$ and $e'$ are generated using different slices from the $i^{\tn{th}}$ slice-family.

At each location $e$, where the control $g^{j}(x)$ is applied, the following transitions are defined for all $i\in\{1,\dots,$\\$|K_{U}|\}\backslash\{j\}$
\begin{subequations}
\begin{align}
&\delta_{e\rightarrow e'} = (e, G_{e\rightarrow e'}, \sigma_{\tn{c}}^{i}, R_{\tn{c},e\rightarrow e'}, e'),
\end{align}
where the control $g^{i}(x)$ is applied in $e'$ and
\begin{align}
&G_{e\rightarrow e'}=\{c\bm{\geq} 0| c\in \{c^{i}_{2}|i\in\bm{k}\}\}\label{eqn:TGA_guard_control}\\
&R_{\tn{c},e\rightarrow e'}=C.\label{eqn:TGA_reset_control}
\end{align}
\end{subequations}
Note that there are no active guard conditions and that the exact values of $\alpha,\beta$ are provided in Theorem~\ref{prop:suf_cond_soundness} in Section~\ref{sec:partitioning_ss} to obtain soundness of the abstraction.
\end{itemize}
\end{procedure}

For convenience the following notion is introduced.
\begin{definition}
Let $\mathcal{S}$ be a collection of slice-families, i.e., $\mathcal{S}=\{\mathcal{S}^{i}|i\in\bm{k}\}$. Then $\mathcal{G}\left(\mathcal{S}\right)$ is the timed game automaton generated by $K(\mathcal{S})$ according to Procedure~\ref{procedure:generation_of_TGA}.
\end{definition}

\begin{remark}
Nonetheless, the locations of $\mathcal{G}(\mathcal{S})$ are associated with cells of $K_{X}(\mathcal{S})$, we will also utilize the timed game automaton $\mathcal{G}_{\tn{ex}}(\mathcal{S})$ with locations associated to extended cells, i.e., (recall the definition of $\mathcal{Y}$ from Definition~\ref{def:extended_cell})
\begin{align}
E=\{e_{\tn{ex},y}|y\in\mathcal{Y}\}\times K_{U}.\label{eqn:locations_from_extended_cells}
\end{align}
The other steps of Procedure~\ref{procedure:generation_of_TGA} are identical for the two timed game automata $\mathcal{G}(\mathcal{S})$ and $\mathcal{G}_{\tn{ex}}(\mathcal{S})$.
\end{remark}


\section{Properties of the Abstraction}\label{sec:partitioning_ss}
The purpose of this section is to derive conditions for the partitions of the state space and control, and the conditions for the update maps under which an abstraction is sound.

To derive properties of the closed-loop system, we need a notion of controlled system, similar to $\mathcal{G}_{\kappa}$ from Definition~\ref{def:closed_loop_game}.
\begin{definition}
Let the control system $\Gamma=(X,U,f)$ be controlled using the strategy $\kappa$ be denoted $\Gamma_{\kappa}$. Then the dynamics of $\Gamma_{\kappa}$ is given by
\begin{subequations}
\begin{align}
\dot{x}&=f(x,g(x))\tn{ where}\\
g(x)&=g^{i}(x)\,\forall x\in e_{j}\tn{ iff }\kappa(e_{j})=\sigma^{i}_{\tn{c}}.
\end{align}
\end{subequations}
\end{definition}
\begin{remark}
We assume that Filippov solutions do not occur, as the implementation of the timed automaton is based on integers in the guards and invariants. For more details about the problems that can occur when having Filippov solutions, see \cite{On_formalism_and_stability_of_switched_systems}.
\end{remark}

A useful abstraction preserves safety. Therefore, the following is defined \cite{Progress_on_Reachability_Analysis_of_Hybrid_Systems_Using_Predicate_Abstraction}.
\begin{definition}[Sound Abstraction]\label{def:sound_abstraction}
 $ $ Let $ $ $\Gamma=(X,$\\$U,f)$ be a control system and suppose its state space $X$ is partitioned by $K_{X}(\mathcal{S})=\{e_{i}|i\in\bm{k}\}$ and its control is partitioned by $K_{U}$. Let the initial states be $X_{0}=\bigcup_{i\in\mathcal{I}}e_{i}$, with $\mathcal{I}\subseteq\bm{k}$.
Then a timed game automaton $\mathcal{G}=(E=K_{X}\times K_{U}, E_{0}, C, \Sigma_{\tn{c}}, \Sigma_{\tn{u}}, I, \Delta)$ with $E = K_{X}(\mathcal{S}) \times K_{U}$ and $E_{0}=\{e_{i}|i\in\mathcal{I}\}\times K_{U}$ is said to be a sound abstraction of $\Gamma$ on $[t_{1}, t_{2}]$ if $\forall t\in[t_{1}, t_{2}]$ and any strategy $\kappa$
\begin{subequations}
\begin{align}
&e_{i}\cap\tn{Reach}_{[t,t]}(\Gamma_{\kappa},X_{0})\neq\emptyset \tn{ implies}\\
&\notag\exists e_{0}\in E_{0}\tn{ such that}\\
&e_{i}\in\phi_{\mathcal{G}_{\kappa}}(t,e_{0}).
\end{align}
\end{subequations}
Note that the systems apply the same control at all time, as they follow the same strategy.
\end{definition}
If a sound abstraction $\mathcal{G}$ is safe for some strategy $\kappa$ then $\Gamma$ is also safe for the same strategy, as $\mathcal{G}_{\kappa}$ reaches all locations reached by $\Gamma_{\kappa}$.
\begin{definition}[Complete Abstraction]\label{def:complete_abstraction}
Let $\Gamma$ be a control system and suppose its state space $X$ is partitioned by $K_{X}(\mathcal{S})=\{e_{i}|i\in\bm{k}\}$ and its control is partitioned by $K_{U}$ and let the initial states be $X_{0}=\bigcup_{i\in\mathcal{I}}e_{i}$, with $\mathcal{I}\subseteq\bm{k}$.
Then a timed game automaton $\mathcal{G}=(E, E_{0}, C, \Sigma_{\tn{c}}, \Sigma_{\tn{u}}, I, \Delta)$ with $E = K_{X}(\mathcal{S}) \times K_{U}$ and $E_{0}=\{e_{i}|i\in\mathcal{I}\}\times K_{U}$ is said to be a complete abstraction of $\Gamma$ on $[t_{1}, t_{2}]$ if it is a sound abstraction and $\forall t\in[t_{1}, t_{2}]$, any $\kappa$, and
\begin{subequations}
\begin{align}
&\tn{for each }e_{i}\in\tn{Reach}_{[t,t]}(\mathcal{G}_{\kappa},E_{0})\\
&\notag\exists x_{0}\in X_{0}\tn{ such that}\\
&\phi_{\Gamma_{\kappa}}(t,x_{0})\in e_{i}.
\end{align}
\end{subequations}
\end{definition}
A complete abstraction $\mathcal{G}_{\kappa}$ is safe (unsafe) if and only if $\Gamma_{\kappa}$ is also safe (unsafe).

This next proposition follows directly from Proposition~6 in \cite{CDC2010}.
\begin{proposition}\label{prop:par_also_sound_complete}
$ $ A $ $ timed $ $ game $ $ automaton $ $ $\mathcal{G}_{\tn{ex}}(\mathcal{S})=$\\$\mathcal{G}_{1}(\mathcal{S}^{1})||\dots||\mathcal{G}_{k}(\mathcal{S}^{k})$, with locations abstracting extended cells, is a sound (complete) abstraction of the control system $\Gamma$ if and only if $\mathcal{G}_{1}(\mathcal{S}^{1}),\dots,\mathcal{G}_{k}(\mathcal{S}^{k})$ are sound (complete) abstractions of $\Gamma$.
\end{proposition}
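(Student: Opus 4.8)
The plan is to reduce soundness (completeness) of the composition to that of its factors by exploiting two product structures that run in parallel: the geometric product structure of extended cells and the syntactic product structure of the parallel composition. First I would record the geometric fact, immediate from Definition~\ref{def:extended_cell}, that a point lies in the extended cell $e_{\tn{ex},y}$ if and only if it lies in the slice $S^{i}_{y_{i}}$ of every slice-family. Hence for any strategy $\kappa$, any initial $x_{0}$, and any time $t$,
\begin{align}
\notag\phi_{\Gamma_{\kappa}}(t,x_{0})\in e_{\tn{ex},y}\iff\phi_{\Gamma_{\kappa}}(t,x_{0})\in S^{i}_{y_{i}}\ \text{ for all }i\in\bm{k}.
\end{align}
Thus the control system reaches $e_{\tn{ex},y}$ at time $t$ exactly when, along the \emph{same} trajectory, it reaches $S^{i}_{y_{i}}$ in each slice-family at time $t$.

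The core step is the matching decomposition on the automaton side:
\begin{align}
\notag e_{\tn{ex},y}\in\phi_{\mathcal{G}_{\tn{ex},\kappa}}(t,e_{0})\iff y_{i}\in\phi_{\mathcal{G}_{i,\kappa}}\bigl(t,(e_{0})_{i}\bigr)\ \text{ for all }i\in\bm{k}.
\end{align}
To prove this I would show that a run of $\mathcal{G}_{1}(\mathcal{S}^{1})||\dots||\mathcal{G}_{k}(\mathcal{S}^{k})$ projects onto a run of each factor and, conversely, that factor runs agreeing on elapsed time and on the sequence of controllable actions can be interleaved into a single composed run. The projection direction is routine. For the converse I would use that the clock pairs $c^{i}$ are pairwise disjoint and that, by Procedure~\ref{procedure:generation_of_TGA}, every invariant and guard of $\mathcal{G}_{i}$ constrains only $c^{i}$; the uncontrollable action $\sigma_{\tn{u}}^{i}$ is local to factor $i$, changing only its location component and resetting only $c^{i}$, whereas the controllable actions $\sigma_{\tn{c}}^{j}$ are shared and reset all clocks simultaneously. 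Since time advances uniformly in all factors and the per-factor constraints are on disjoint clocks, any interleaving of the local uncontrollable moves that is individually admissible is jointly admissible, and memorylessness and clock-independence of $\kappa$ guarantee that the control choice offered at a composed location is the same one seen by each factor.

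With both decompositions in hand the proposition is short bookkeeping. For soundness, assume each $\mathcal{G}_{i}$ is sound. If $e_{\tn{ex},y}\cap\tn{Reach}_{[t,t]}(\Gamma_{\kappa},X_{0})\neq\emptyset$, the geometric fact gives a single $x_{0}$ whose trajectory lies in $S^{i}_{y_{i}}$ at time $t$ for every $i$; soundness of each $\mathcal{G}_{i}$ yields $y_{i}\in\phi_{\mathcal{G}_{i,\kappa}}(t,(e_{0})_{i})$, and the automaton decomposition then gives $e_{\tn{ex},y}\in\phi_{\mathcal{G}_{\tn{ex},\kappa}}(t,e_{0})$. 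Conversely, assume $\mathcal{G}_{\tn{ex}}$ is sound and fix $i$; if the trajectory lies in $S^{i}_{y_{i}}$ at time $t$, then it lies in some extended cell $e_{\tn{ex},y'}$ with $y'_{i}=y_{i}$, so soundness of $\mathcal{G}_{\tn{ex}}$ reaches $y'$ and projection to factor $i$ reaches $y'_{i}=y_{i}$. The completeness statement follows by running the same argument with the roles of $\Gamma_{\kappa}$ and the automaton interchanged, using the completeness clause of Definition~\ref{def:complete_abstraction}.

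I expect the reachability decomposition on the automaton side, and in particular its converse interleaving direction, to be the main obstacle, since that is where one must verify that the independent timings of uncontrollable transitions across different slice-families can always be realized simultaneously; this is exactly where disjointness of the clock pairs $c^{i}$ and the clock-independence of $\kappa$ are essential. A minor point needing care is the boundary case in which the trajectory lies in the closure of more than one slice of a single family, which is controlled by the closedness of slices (Definition~\ref{def:slice}) and the transversality built into the partition and does not affect the product structure. Finally, since this decomposition is precisely the content of Proposition~6 in \cite{CDC2010}, the whole argument can alternatively be obtained by invoking that result factor-by-factor.
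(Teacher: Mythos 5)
The paper does not actually prove this proposition; it states only that it ``follows directly from Proposition~6 in \cite{CDC2010}'', so there is no in-paper argument to compare yours against step by step. Your reconstruction --- the two parallel product structures, the geometric one ($e_{\tn{ex},y}=\pitchfork_{i=1}^{k}S^{i}_{y_{i}}$, which is the plain intersection of the slices) and the syntactic one (pairwise disjoint clock pairs $c^{i}$, invariants and guards of $\mathcal{G}_{i}$ constraining only $c^{i}$, uncontrollable actions local to one factor, controllable actions synchronized across factors) --- is exactly the content that citation must carry, and your closing remark correctly identifies that the paper's own route is simply to invoke the cited result factor by factor. As a reconstruction of the intended argument it is on target.

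There is, however, one point that your appeal to ``memorylessness and clock-independence of $\kappa$'' does not resolve, and it affects the \emph{if} direction (factors sound $\Rightarrow$ composition sound). Soundness of a factor $\mathcal{G}_{i}(\mathcal{S}^{i})$, per Definition~\ref{def:sound_abstraction}, quantifies over strategies defined on that factor's locations, i.e., over control laws that are constant on each slice of $\mathcal{S}^{i}$. A strategy $\kappa$ for $\mathcal{G}_{\tn{ex}}(\mathcal{S})$ is defined on extended cells and may assign different controls to different extended cells contained in the same slice $S^{i}_{y_{i}}$; its ``projection'' to factor $i$ is therefore not a strategy of $\mathcal{G}_{i}$ at all, and the factor-soundness hypothesis does not literally apply to the closed loop $\Gamma_{\kappa}$. (The converse direction is unproblematic: a factor strategy lifts to a composed strategy that is constant in the other indices.) To close this you must either restrict the quantification to strategies that factor through the slices, or prove the per-factor reachability bound for an arbitrary admissible switched control law --- which is in effect what the autonomous-system result of \cite{CDC2010} gives once $\kappa$ is fixed and the closed loop is treated as an autonomous switched system. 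The paper glosses over this as well, but a self-contained proof needs to state explicitly which reading it uses so that the strategy quantifiers on the two sides of the ``if and only if'' match.
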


Let $\cal S$ be a slice family. We say that a control $g$ is an admissible control if for each slice $S \in \cal S$ we have either $\dot{\varphi}_{g} (x) > 0$ for all $x \in S\backslash\tn{Cr}(f_{g})$ or $\dot{\varphi}_{g} (x) < 0$ for all $x \in S\backslash\tn{Cr}(f_{g})$. We introduce the notation  $\dot{\varphi}_{g} > 0$ on $S$ if and only if  $\dot{\varphi}_{g} (x) > 0$ for some, thus, for all $x \in S\backslash\tn{Cr}(f_{g})$.

\begin{lemma}\label{lem:sound}
Let $\cal S$ be a slice family on $\mathbb R^n$, and $\varphi$ be a partitioning function associated to $\cal S$.
Let  $\{t_{j}\in\mathds{R}_{\geq0}|~j\in \bm{k}\cup\{0\}\}$ be a sequence of nondecreasing real numbers, and $\{g_j: \mathbb R^m \to \mathbb R^n|~j\in \bm{k}\}$ be a sequence of controls, where $g_{j}$ is applied for $t\in[t_{j-1},t_{j}]$.

Suppose $S \in \cal S$. For convenience, let $\underline{\dot{\varphi}}_{g_{j}}\equiv\inf\{|\dot{\varphi}_{g_j}(x)|\in\mathds{R}_{\geq0}|x\in S\}$, $\overline{\dot{\varphi}}_{g_{j}}\equiv\sup\{|\dot{\varphi}_{g_j}(x)|\in\mathds{R}_{\geq0}|x\in S\}$, $\Delta_{k}(t)\equiv t-t_{k-1}$, $J^{+}\equiv\{j\in\bm{k}|\dot{\varphi}_{g_{j}}>0\}$, and $J^{-}\equiv \bm{k}\backslash J^{+}$.

If for all $j\in\bm{k}$
\begin{align}
\tn{Reach}_{[t_{j-1},t_{j}]}(\Gamma_{g_{j}},x_{j-1})\subset S
\end{align}
then
for all $t\in[t_{k-1},t_{k}]$
\begin{align}
&\notag\sum_{j\in J^{+}\backslash\{k\}}  \Delta_{j}(t_{j})\underline{\dot{\varphi}}_{g_{j}}-\sum_{j\in J^{-}\backslash\{k\}}\Delta_{j}(t_{j})\overline{\dot{\varphi}}_{g_{j}}\\
&\notag+\begin{cases}\Delta_{k}(t)\underline{\dot{\varphi}}_{g_{k}}\tn{ if }\dot{\varphi}_{g_{k}}>0\\
-\Delta_{k}(t)\overline{\dot{\varphi}}_{g_{k}}\tn{ if }\dot{\varphi}_{g_{k}}<0\end{cases}\\
&\notag\leq\sum_{j=1}^{k-1}\int_{t_{j-1}}^{t_{j}}\dot{\varphi}_{g_{j}}(\phi_{\Gamma_{g_{j}}}(\Delta_{j}(\tau),x_{j-1}))d\tau \\
&\notag+ \int_{t_{k-1}}^{t}\dot{\varphi}_{g_{k}}(\phi_{\Gamma_{g_{k}}}(\Delta_{k}(\tau),x_{k-1}))d\tau\\
&\notag\leq\sum_{j\in J^{+}\backslash\{k\}}\Delta_{j}(t_{j})\overline{\dot{\varphi}}_{g_{j}}-\sum_{j\in J^{-}\backslash\{k\}}\Delta_{j}(t_{j})\underline{\dot{\varphi}}_{g_{j}} \\
&+
\begin{cases}
\Delta_{k}(t)\overline{\dot{\varphi}}_{g_{k}} \tn{ if }\dot{\varphi}_{g_{k}}>0\\
-\Delta_{k}(t)\underline{\dot{\varphi}}_{g_{k}}\tn{ if }\dot{\varphi}_{g_{k}}<0
\end{cases}.\label{eqn:term3}
\end{align}
Note that $x_{j}\equiv\phi_{\Gamma_{g_{j}}}(\Delta_{j}(t_{j}),x_{j-1})$.
\end{lemma}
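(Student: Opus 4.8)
The plan is to treat the chain of inequalities \eqref{eqn:term3} as a term-by-term estimate: the middle quantity is a sum of integrals, one for each time interval $[t_{j-1},t_{j}]$ (the last being the partial interval $[t_{k-1},t]$), and I would bound each integral separately and then sum. The starting point is the hypothesis $\tn{Reach}_{[t_{j-1},t_{j}]}(\Gamma_{g_{j}},x_{j-1})\subset S$, which says precisely that for every $\tau\in[t_{j-1},t_{j}]$ the point $\phi_{\Gamma_{g_{j}}}(\Delta_{j}(\tau),x_{j-1})$ at which the $j^{\tn{th}}$ integrand is evaluated lies in $S$. Hence every value of the integrand is a value of $\dot{\varphi}_{g_{j}}$ taken somewhere on $S$, and the per-interval estimates reduce to bounding $\dot{\varphi}_{g_{j}}$ on $S$.

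First I would invoke admissibility of each $g_{j}$. Since $g_{j}$ is admissible, $\dot{\varphi}_{g_{j}}$ has a constant sign on $S\backslash\tn{Cr}(f_{g_{j}})$, so $j$ lies in exactly one of $J^{+}$ or $J^{-}$. For $j\in J^{+}$ one has $\dot{\varphi}_{g_{j}}(x)=|\dot{\varphi}_{g_{j}}(x)|$ for all $x\in S$, whence $\underline{\dot{\varphi}}_{g_{j}}\leq\dot{\varphi}_{g_{j}}(x)\leq\overline{\dot{\varphi}}_{g_{j}}$ directly from the definitions of the infimum and supremum. For $j\in J^{-}$ one has $\dot{\varphi}_{g_{j}}(x)=-|\dot{\varphi}_{g_{j}}(x)|$, so the same definitions give $-\overline{\dot{\varphi}}_{g_{j}}\leq\dot{\varphi}_{g_{j}}(x)\leq-\underline{\dot{\varphi}}_{g_{j}}$. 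In both cases these are pointwise constant bounds on the integrand valid throughout $S$, hence along the whole trajectory on the $j^{\tn{th}}$ interval.

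Then I would integrate. Since the $j^{\tn{th}}$ integrand is sandwiched between two constants over an interval of length $\Delta_{j}(t_{j})=t_{j}-t_{j-1}$ (and $\Delta_{k}(t)=t-t_{k-1}$ for the last, partial interval), the corresponding integral lies between those constants times the interval length. For $j\in J^{+}\backslash\{k\}$ this yields the pair $\Delta_{j}(t_{j})\underline{\dot{\varphi}}_{g_{j}}$ and $\Delta_{j}(t_{j})\overline{\dot{\varphi}}_{g_{j}}$; for $j\in J^{-}\backslash\{k\}$ it yields $-\Delta_{j}(t_{j})\overline{\dot{\varphi}}_{g_{j}}$ and $-\Delta_{j}(t_{j})\underline{\dot{\varphi}}_{g_{j}}$; the last term, on $[t_{k-1},t]$, produces exactly the two case brackets of \eqref{eqn:term3} according to whether $\dot{\varphi}_{g_{k}}>0$ or $\dot{\varphi}_{g_{k}}<0$. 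Summing the lower bounds over $j\in\bm{k}$ gives the first line of \eqref{eqn:term3} and summing the upper bounds gives the last; monotonicity of the integral preserves the ordering, so the full chain follows.

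The only delicate point is the interplay between the definitions of $\underline{\dot{\varphi}}_{g_{j}}$ and $\overline{\dot{\varphi}}_{g_{j}}$, which are taken with the absolute value $|\dot{\varphi}_{g_{j}}|$ over all of $S$, and the signed integrand $\dot{\varphi}_{g_{j}}$ appearing in \eqref{eqn:term3}. Admissibility is exactly what lets me remove the absolute value with the correct sign on $J^{+}$ and $J^{-}$, converting $\inf|\dot{\varphi}_{g_{j}}|$ and $\sup|\dot{\varphi}_{g_{j}}|$ into genuine two-sided bounds for $\dot{\varphi}_{g_{j}}$; I would also note that any critical point of $f_{g_{j}}$ lying in $S$, where $\dot{\varphi}_{g_{j}}=0$, does not violate these pointwise bounds, so the estimate is unaffected. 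Everything else is elementary monotonicity of the integral, so I expect no further obstruction.
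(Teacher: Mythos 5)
Your proposal is correct and follows essentially the same route as the paper, whose entire proof is the one-line observation that the chain of inequalities follows from the pointwise bounds $\underline{\dot{\varphi}}_{g_j}\leq\dot{\varphi}_{g_j}(x)\leq\overline{\dot{\varphi}}_{g_j}$ on $S$ integrated over each subinterval. If anything, you are more careful than the paper: you explicitly use the reachability hypothesis to keep the trajectory in $S$ and use admissibility to resolve the sign when converting $\inf|\dot{\varphi}_{g_j}|$ and $\sup|\dot{\varphi}_{g_j}|$ into the signed bounds needed for $j\in J^{-}$, a point the paper's one-liner glosses over.
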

\begin{proof}
The inequalities in $\eqref{eqn:term3}$ are the consequence of $\underline{\dot{\varphi}}_{g_j} \leq \dot{\varphi}_{g_j}(x)$ and $\overline{\dot{\varphi}}_{g_j} \geq \dot{\varphi}_{g_j}(x)$ for all $x \in S$.
\end{proof}
The principle of the lemma is illustrated in Figure~\ref{fig:lemma_principle}, where $\varphi(\phi_{\Gamma}(\Delta_{j}(\tau),x_{j-1}))$ for $j=1,\dots,4$
(black line) is plotted together with the upper approximation (red) and lower approximation (blue). It is seen that the inaccuracy of the approximation increases with time.
\begin{figure}[!htb]
    \centering
       \includegraphics[width=\linewidth]{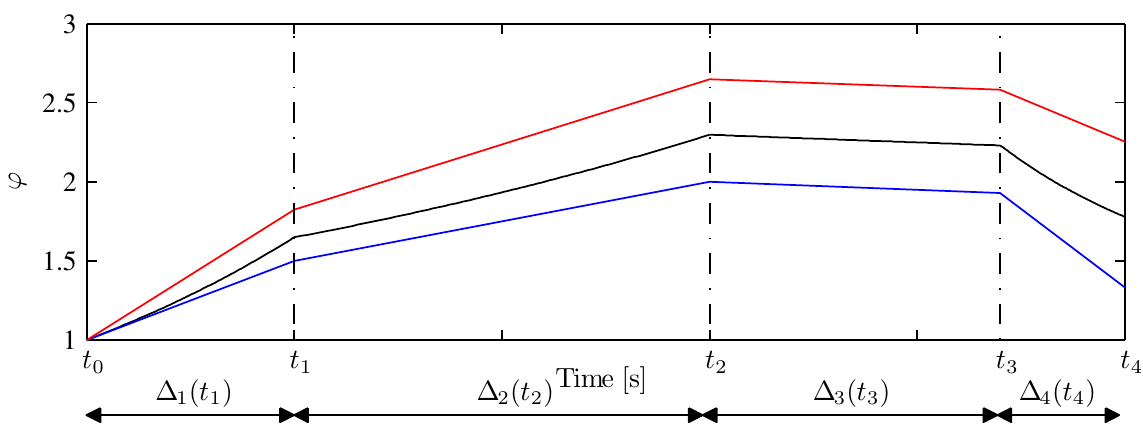}
    \caption{The value of the Lyapunov function evaluated along a solution curve (black) and an upper approximation (red) and a lower approximation (blue) of it.}
    \label{fig:lemma_principle}
\end{figure}

We use Lemma~\ref{lem:sound} to set up invariants and guards for the solution to stay in a slice $S$. Suppose $\varphi^{-1}([1, 3])$ in Figure~\ref{fig:lemma_principle} is a slice, then Corollary~\label{cor:over-approx} determines the time, when the red and blue lines intersect $\varphi^{-1}(1)$ and $\varphi^{-1}(3)$.

\begin{corollary}\label{cor:over-approx}
Let $\cal S$ be a slice family on $\mathbb R^n$, and $\varphi$ be a partitioning function associated to $\cal S$. Suppose $S\in\mathcal{S}$ and $S=\varphi^{-1}([a_{h-1},a_{h}])$, and define $\Delta a\equiv a_{h}-a_{h-1}$. Furthermore, assume that $x_{0}\in\varphi^{-1}(a_{h-1})$ and that $\dot{\varphi}_{g_{1}}>0$. Then from \eqref{eqn:term3} it follows that for all $t\in[t_{k-1},t_{k}]$ with $k\in J^{+}$, an invariant for the solution to stay in the slice $S$ is
\begin{align}
\sum_{j\in J^{+}\backslash\{k\}}\Delta_{j}(t_{j})\underline{\dot{\varphi}}_{g_{j}}-\sum_{j\in J^{-}\backslash\{k\}}\Delta_{j}(t_{j})\overline{\dot{\varphi}}_{g_{j}}+\Delta_{k}(t)\underline{\dot{\varphi}}_{g_{k}}\leq\Delta a\label{eqn:inv_s1}.
\end{align}
This implies that if inequality \eqref{eqn:inv_s1} is violated, then $\phi_{\Gamma_{g_{k}}}(t-t_{k-1},x_{k-1})\notin S$. Similarly, for all $t\in[t_{k-1},t_{k}]$, with $k\in J^{+}$ a guard for the solution to stay in the slice is
\begin{align}
\sum_{j\in J^{+}\backslash\{k\}}\Delta_{j}(t_{j})\overline{\dot{\varphi}}_{g_{j}}-\sum_{j\in J^{-}\backslash\{k\}}\Delta_{j}(t_{j})\underline{\dot{\varphi}}_{g_{j}}+\Delta_{k}(t)\overline{\dot{\varphi}}_{g_{k}}\geq\Delta a\label{eqn:gua_s1}.
\end{align}
This implies that if \eqref{eqn:gua_s1} is violated, then $\phi_{\Gamma_{g_{k}}}(t-t_{k-1},x_{k-1})$\\$\in S$.

Additionally, for all $t\in[t_{k-1},t_{k}]$, with $k\in J^{-}$ an invariant for the solution to stay in the slice is
\begin{align}
\sum_{j\in J^{+}\backslash\{k\}}\Delta_{j}(t_{j})\overline{\dot{\varphi}}_{g_{j}}-\sum_{j\in J^{-}\backslash\{k\}}\Delta_{j}(t_{j})\underline{\dot{\varphi}}_{g_{j}}-\Delta_{k}(t)\underline{\dot{\varphi}}_{g_{k}}\geq0\label{eqn:inv_s2}.
\end{align}
Finally, for all $t\in[t_{k-1},t_{k}]$, with $k\in J^{-}$ a guard for the solution to stay in the slice is
\begin{align}
\sum_{j\in J^{+}\backslash\{k\}}\Delta_{j}(t_{j})\underline{\dot{\varphi}}_{g_{j}}-\sum_{j\in J^{-}\backslash\{k\}}\Delta_{j}(t_{j})\overline{\dot{\varphi}}_{g_{j}}-\Delta_{k}(t)\overline{\dot{\varphi}}_{g_{k}}\leq0\label{eqn:gua_s2}.
\end{align}
\end{corollary}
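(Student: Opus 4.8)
The plan is to observe that the three-line middle expression of \eqref{eqn:term3} is nothing but the \emph{total change} of $\varphi$ along the concatenated solution, and then to read ``the solution stays in $S$'' as a two-sided bound on that change. First I would differentiate $\varphi$ along each piece: by the chain rule $\tfrac{d}{d\tau}\varphi(\phi_{\Gamma_{g_j}}(\Delta_j(\tau),x_{j-1}))=\dot\varphi_{g_j}(\phi_{\Gamma_{g_j}}(\Delta_j(\tau),x_{j-1}))$, so each summand $\int_{t_{j-1}}^{t_j}\dot\varphi_{g_j}$ equals $\varphi(x_j)-\varphi(x_{j-1})$, and the last (partial) interval contributes $\varphi(x(t))-\varphi(x_{k-1})$, writing $x(t)\equiv\phi_{\Gamma_{g_k}}(\Delta_k(t),x_{k-1})$. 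Because the pieces are glued through $x_j\equiv\phi_{\Gamma_{g_j}}(\Delta_j(t_j),x_{j-1})$, the sum telescopes, so the middle term of \eqref{eqn:term3} is exactly $\varphi(x(t))-\varphi(x_0)$, and the hypothesis $x_0\in\varphi^{-1}(a_{h-1})$ turns this into $\varphi(x(t))-a_{h-1}$.

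Next I would translate membership in the slice. Since $S=\varphi^{-1}([a_{h-1},a_h])$, we have $x(t)\in S$ iff $0\le\varphi(x(t))-a_{h-1}\le\Delta a$. Writing $L(t)$ and $U(t)$ for the lower and upper expressions in \eqref{eqn:term3}, Lemma~\ref{lem:sound} gives $L(t)\le\varphi(x(t))-a_{h-1}\le U(t)$. The four claims then follow by deciding, for each sign case, which face of $S$ is the threatened one and which of the two bounds makes the contrapositive work.

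For $k\in J^+$ the current piece increases $\varphi$, so the threatened face is $\varphi^{-1}(a_h)$, i.e.\ a crossing means $\varphi(x(t))-a_{h-1}>\Delta a$. The invariant \eqref{eqn:inv_s1} is $L(t)\le\Delta a$: if it is violated then $\varphi(x(t))-a_{h-1}\ge L(t)>\Delta a$, forcing $x(t)\notin S$. The guard \eqref{eqn:gua_s1} is $U(t)\ge\Delta a$: if it is violated then $\varphi(x(t))-a_{h-1}\le U(t)<\Delta a$, so $x(t)\in S$. Symmetrically, for $k\in J^-$ the threatened face is $\varphi^{-1}(a_{h-1})$, i.e.\ a crossing means $\varphi(x(t))-a_{h-1}<0$; the invariant \eqref{eqn:inv_s2} is $U(t)\ge0$ and the guard \eqref{eqn:gua_s2} is $L(t)\le0$, and the same contrapositive argument with the opposite bound yields the two implications. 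In each case one substitutes the explicit $L(t),U(t)$ read off \eqref{eqn:term3} for the relevant sign of $\dot\varphi_{g_k}$.

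The only genuine points to get right are the telescoping/chain-rule identification of the integral with the endpoint value $\varphi(x(t))-a_{h-1}$ (which uses continuity of the trajectory across the control switches, guaranteed by the gluing $x_j\equiv\phi_{\Gamma_{g_j}}(\Delta_j(t_j),x_{j-1})$, together with differentiability of $\varphi$ along each piece), and the bookkeeping that matches the \emph{lower} bound $L$ to the invariant and the \emph{upper} bound $U$ to the guard in the $J^+$ case, with the roles of $L$ and $U$ interchanged in the $J^-$ case. Everything else is a direct contrapositive of the two-sided estimate already supplied by Lemma~\ref{lem:sound}.
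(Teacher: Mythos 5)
Your proposal is correct and follows essentially the same route as the paper's own proof: identify the middle term of \eqref{eqn:term3} with the total change $\varphi(x(t))-a_{h-1}$, observe that exit through $\varphi^{-1}(a_{h})$ (resp.\ $\varphi^{-1}(a_{h-1})$) corresponds to that change reaching $\Delta a$ (resp.\ $0$) depending on the sign of $\dot{\varphi}_{g_{k}}$, and then take the contrapositive of the two-sided bounds supplied by Lemma~\ref{lem:sound}. You merely spell out the telescoping and the case bookkeeping more explicitly than the paper does.
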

\begin{proof}
Note that if $\dot{\varphi}_{g_{k}}>0$, the solution leaves the slice at some $t\in[t_{k-1},t_{k}]$ when $\phi_{\Gamma_{g_{j}}}(t-t_{k-1},x_{k-1})\in\varphi^{-1}(a_{h})$, i.e., for some $t\in[t_{k-1},t_{k}]$
\begin{align}
\notag&\sum_{j=1}^{k-1}\int_{t_{j-1}}^{t_{j}}\dot{\varphi}_{g_{j}}(\phi_{\Gamma_{g_{j}}}(\Delta_{j}(\tau),x_{j-1}))d\tau\\
&+ \int_{t_{k-1}}^{t}\dot{\varphi}_{g_{k}}(\phi_{\Gamma_{g_{k}}}(\Delta_{k}(\tau),x_{k-1}))d\tau
=\Delta a.
\end{align}
If $\dot{\varphi}_{g_{k}}<0$, the solution leaves the slice when $\phi_{\Gamma_{g_{k}}}(\Delta_{k}(t),$\\$x_{k-1})\in\varphi^{-1}(a_{h-1})$ for some $t\in[t_{k-1},t_{k}]$, i.e., for some $t\in[t_{k-1},t_{k}]$
\begin{align}
\notag&\sum_{j=1}^{k-1}\int_{t_{j-1}}^{t_{j}}\dot{\varphi}_{g_{j}}(\phi_{\Gamma_{g_{j}}}(\Delta_{j}(\tau),x_{j-1}))d\tau\\
&+ \int_{t_{k-1}}^{t}\dot{\varphi}_{g_{k}}(\phi_{\Gamma_{g_{k}}}(\Delta_{k}(\tau),x_{k-1}))d\tau=0.
\end{align}
This provides the right hand sides of \eqref{eqn:inv_s1}-\eqref{eqn:gua_s2}. However, note that $\leq$ $(\geq)$ also changes to $\geq$ $(\leq)$, which is due to the changing direction of $\dot{\varphi}_{g_{j}}$.
\end{proof}

The corollary provides guard and invariant conditions, which give the minimum and maximum time a trajectory stays within a slice for a given sequence of controls.

A sufficient and necessary condition for soundness of an abstraction is formulated in the following.
To stress that the control is of importance, we denote $S^{i}_{e}$, used in \eqref{eqn:TGA_invariants} and \eqref{eqn:TGA_guard_uncontrol}, by $S^{i}_{(y_{i},g)}$.
\begin{theorem}
\label{prop:suf_cond_soundness}
A timed game automaton $\mathcal{G}(\mathcal{S})=(E, E_{0}, C,$\\$ \Sigma_{\tn{c}},\Sigma_{\tn{u}}, I, \Delta)$ is a sound abstraction of the control system $\Gamma$, if and only if
its invariants and guards are given by \eqref{eqn:TGA_invariants} and \eqref{eqn:TGA_guard_uncontrol}, where for each $y\in \mathcal{Y}$ and $g\in K_{U}$
\begin{subequations}\label{eqn:suf_cond_soundness}
\begin{align}
\underline{t}_{S^{i}_{(y_{i},g)}}&\leq\frac{|a^{i}_{y_{i}}-a^{i}_{y_{i}-1}|}{\sup\{|\dot{\varphi}^{i}_{g}(x)|\in\mathds{R}_{\geq0}|x\in S^{i}_{y_{i}}\}}\\
\overline{t}_{S^{i}_{(y_{i},g)}}&\geq\frac{|a^{i}_{y_{i}}-a^{i}_{y_{i}-1}|}{\inf\{|\dot{\varphi}^{i}_{g}(x)|\in\mathds{R}_{\geq0}|x\in S^{i}_{y_{i}}\}}
\end{align}
\end{subequations}
and $\dot{\varphi}^{i}_{g}(x)$ is defined as shown in \eqref{eqn:Lyap_der}. 
The update map for transitions relations associated with controllable actions, see \eqref{eqn:TGA_reset_control}, between two locations $e$ and $e'$ with control $g$ respectively $g'$ is
\begin{align}
&\notag v[R_{\tn{c},e\rightarrow e'}]\equiv\\
&\begin{cases}
\begin{bmatrix}
\frac{\overline{t}_{S^{i}_{(y_{i},g')}}}{\overline{t}_{S^{i}_{(y_{i},g)}}}&0\\
0&\frac{\underline{t}_{S^{i}_{(y_{i},g')}}}{\underline{t}_{S^{i}_{(y_{i},g)}}}
\end{bmatrix}v(c^{i})
\tn{ if }\dot{\varphi}_{g}\dot{\varphi}_{g'}>0\\
\begin{bmatrix}
\overline{t}_{S^{i}_{(y_{i},g')}}\\
\underline{t}_{S^{i}_{(y_{i},g')}}
\end{bmatrix}-
\begin{bmatrix}
0&\frac{\overline{t}_{S^{i}_{(y_{i},g')}}}{\underline{t}_{S^{i}_{(y_{i},g)}}}\\
\frac{\underline{t}_{S^{i}_{(y_{i},g')}}}{\overline{t}_{S^{i}_{(y_{i},g)}}}&0
\end{bmatrix}v(c^{i})
\tn{ otherwise.}
\end{cases}\label{eqn:suf_cond_soundness_reset}
\end{align}
\end{theorem}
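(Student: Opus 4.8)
The plan is to reduce the statement to a single slice-family and then show that, under the stated clock dynamics, the pair $c^{i}=(c^{i}_{1},c^{i}_{2})$ records exactly the lower- and upper-approximations of the elapsed ``$\varphi^{i}$-progress'' supplied by Lemma~\ref{lem:sound} and Corollary~\ref{cor:over-approx}. First I would invoke Proposition~\ref{prop:par_also_sound_complete}: since $\mathcal{G}_{\tn{ex}}(\mathcal{S})=\mathcal{G}_{1}(\mathcal{S}^{1})||\dots||\mathcal{G}_{k}(\mathcal{S}^{k})$ is sound if and only if every factor $\mathcal{G}_{i}(\mathcal{S}^{i})$ is sound, and since \eqref{eqn:suf_cond_soundness} and \eqref{eqn:suf_cond_soundness_reset} are stated independently per index $i$, it suffices to prove the equivalence for one slice-family with partitioning function $\varphi=\varphi^{i}$; the cell/extended-cell distinction (Remark after Procedure~\ref{procedure:generation_of_TGA}) does not affect the per-direction timing and is inherited. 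Because $\kappa$ is memoryless and clock-independent, it prescribes along any trajectory of $\Gamma_{\kappa}$ a well-defined switching sequence $g_{1},\dots,g_{k}$ with $g_{j}$ active on $[t_{j-1},t_{j}]$, so the problem collapses to bounding how long the trajectory stays in one slice $S=\varphi^{-1}([a_{h-1},a_{h}])$, $\Delta a=|a_{h}-a_{h-1}|$.

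Next I would make the correspondence between the clocks and the bounds of Corollary~\ref{cor:over-approx} precise. Using the tight choices $\overline{t}_{S^{i}_{(y_{i},g_{j})}}=\Delta a/\underline{\dot{\varphi}}_{g_{j}}$ and $\underline{t}_{S^{i}_{(y_{i},g_{j})}}=\Delta a/\overline{\dot{\varphi}}_{g_{j}}$ allowed by \eqref{eqn:suf_cond_soundness} (equivalently $\underline{\dot{\varphi}}_{g_{j}}=\Delta a/\overline{t}_{S^{i}_{(y_{i},g_{j})}}$, $\overline{\dot{\varphi}}_{g_{j}}=\Delta a/\underline{t}_{S^{i}_{(y_{i},g_{j})}}$), I would prove by induction on the number of switches that immediately after the controllable transition installing $g_{k}$ the clocks satisfy
\begin{align}
\notag v(c^{i}_{1})&=\frac{\overline{t}_{S^{i}_{(y_{i},g_{k})}}}{\Delta a}\Big(\sum\nolimits_{j\in J^{+}\backslash\{k\}}\Delta_{j}(t_{j})\underline{\dot{\varphi}}_{g_{j}}-\sum\nolimits_{j\in J^{-}\backslash\{k\}}\Delta_{j}(t_{j})\overline{\dot{\varphi}}_{g_{j}}\Big),\\
\notag v(c^{i}_{2})&=\frac{\underline{t}_{S^{i}_{(y_{i},g_{k})}}}{\Delta a}\Big(\sum\nolimits_{j\in J^{+}\backslash\{k\}}\Delta_{j}(t_{j})\overline{\dot{\varphi}}_{g_{j}}-\sum\nolimits_{j\in J^{-}\backslash\{k\}}\Delta_{j}(t_{j})\underline{\dot{\varphi}}_{g_{j}}\Big).
\end{align}
Adding the rate-$1$ delay $\Delta_{k}(t)=t-t_{k-1}$ and dividing by $\Delta a$, the invariant test $c^{i}_{1}\bm{\leq}\overline{t}_{S^{i}_{(y_{i},g_{k})}}$ becomes exactly \eqref{eqn:inv_s1} (respectively \eqref{eqn:inv_s2} when $\dot{\varphi}_{g_{k}}<0$), and the guard $c^{i}_{2}\bm{\geq}\underline{t}_{S^{i}_{(y_{i},g_{k})}}$ becomes exactly \eqref{eqn:gua_s1} (respectively \eqref{eqn:gua_s2}). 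The base case is the reset $R_{\tn{u}}=c^{i}$ on entry, which sets both clocks to $0$; the induction step is a substitution of \eqref{eqn:suf_cond_soundness_reset} into the value at the instant of switching, $v(c^{i}_{\bullet})+\Delta_{k-1}(t_{k-1})$.

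With this tracking identity, sufficiency is immediate: while a trajectory of $\Gamma_{\kappa}$ lies in $S$ at time $t$, \eqref{eqn:inv_s1}/\eqref{eqn:inv_s2} hold, so the invariant is respected and $\mathcal{G}_{\kappa}$ may delay into the matching location; when the trajectory first crosses a boundary of $S$, \eqref{eqn:gua_s1}/\eqref{eqn:gua_s2} hold, so the uncontrollable guard is enabled and $\mathcal{G}_{\kappa}$ takes the corresponding $\sigma^{i}_{\tn{u}}$ transition at the same $t$; recombining the factors through Proposition~\ref{prop:par_also_sound_complete} gives $e\in\phi_{\mathcal{G}_{\kappa}}(t,e_{0})$, which is Definition~\ref{def:sound_abstraction}. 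Relaxing the tight values to the inequalities \eqref{eqn:suf_cond_soundness} (larger $\overline{t}$, smaller $\underline{t}$) only weakens the invariant and the guard, hence preserves the over-approximation, which is why \eqref{eqn:suf_cond_soundness} are stated as inequalities.

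For necessity I would argue contrapositively from the tightness of the bounds in Lemma~\ref{lem:sound} and Corollary~\ref{cor:over-approx}: if \eqref{eqn:suf_cond_soundness}a fails, i.e. $\underline{t}_{S^{i}_{(y_{i},g)}}>\Delta a/\sup|\dot{\varphi}^{i}_{g}|$, then choosing an entry point and control under which the dwell time approaches $\Delta a/\sup|\dot{\varphi}^{i}_{g}|$ yields a trajectory that has already entered the neighbouring cell while $c^{i}_{2}\bm{\geq}\underline{t}_{S^{i}_{(y_{i},g)}}$ is still disabled, so $\mathcal{G}_{\kappa}$ cannot reach that location at time $t$, contradicting soundness; symmetrically, violating \eqref{eqn:suf_cond_soundness}b lets the invariant force $\mathcal{G}_{\kappa}$ out of a location the continuous trajectory still occupies. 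The main obstacle is the induction step for the update map \eqref{eqn:suf_cond_soundness_reset}, specifically the reversing (anti-diagonal, affine) branch $\dot{\varphi}_{g}\dot{\varphi}_{g'}<0$: one must check that the offset $(\overline{t}_{S^{i}_{(y_{i},g')}},\underline{t}_{S^{i}_{(y_{i},g')}})^{\!\top}$ together with the swapped, cross-normalized scalings converts ``progress towards $a_{h}$'' into ``remaining progress towards $a_{h-1}$'' (sending a fraction $\rho$ to $1-\rho$ while exchanging the roles of $c^{i}_{1}$ and $c^{i}_{2}$), and that this reproduces the correct $J^{+}/J^{-}$ bookkeeping after the sign of $\dot{\varphi}$ flips, the direction-preserving diagonal branch being the comparatively routine verification that $\overline{t}$-units of $g$ rescale into $\overline{t}$-units of $g'$ (and likewise for the underlines).
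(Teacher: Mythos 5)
Your proposal is correct and follows essentially the same route as the paper: an induction on the switching sequence showing that the clock valuations, evolved under the update map \eqref{eqn:suf_cond_soundness_reset}, exactly reproduce the (normalized) invariant and guard conditions of Corollary~\ref{cor:over-approx}; your ``tracking identity'' for $v(c^{i}_{1})$, $v(c^{i}_{2})$ is precisely the paper's \eqref{eqn:clock_valuation_1_suf} rewritten via $\underline{\dot{\varphi}}_{g_{k}}=\Delta a/\overline{t}_{g_{k}}$. Your additions --- the explicit appeal to Proposition~\ref{prop:par_also_sound_complete} to reduce to one slice-family, and the spelled-out contrapositive for necessity --- are elaborations the paper leaves implicit rather than a different argument.
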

\begin{proof}
In this proof, we show by induction that the invariants and guards imposed on the clocks of $\mathcal{G}(\mathcal{S})$ generated using Theorem~\ref{prop:suf_cond_soundness}, where
\begin{subequations}\label{eqn:closest_sound}
\begin{align}
\underline{t}_{S^{i}_{(y_{i},g)}}&=\frac{|a^{i}_{y_{i}}-a^{i}_{y_{i}-1}|}{\sup\{|\dot{\varphi}^{i}_{g}(x)|\in\mathds{R}_{\geq0}|x\in S^{i}_{y_{i}}\}}\\
\overline{t}_{S^{i}_{(y_{i},g)}}&=\frac{|a^{i}_{y_{i}}-a^{i}_{y_{i}-1}|}{\inf\{|\dot{\varphi}^{i}_{g}(x)|\in\mathds{R}_{\geq0}|x\in S^{i}_{y_{i}}\}}
\end{align}
\end{subequations}
are equivalent to the guards and invariants given by Corollary~\ref{cor:over-approx}. As Corollary~\ref{cor:over-approx} gives conditions for a sound approximation this will prove the theorem.

Recall that $S^{i}_{y_{i}}=(\varphi^{i})^{-1}([a^{i}_{y_{i}-1},a^{i}_{y_{i}}])$, where $a^{i}_{y_{i}-1}<a^{i}_{y_{i}}$ and $\Delta a=a^{i}_{y_{i}}-a^{i}_{y_{i}-1}$. Assume that $x_{0}\in(\varphi^{i})^{-1}(a^{i}_{y_{i}-1})$, $\dot{\varphi}^{i}_{g_{1}}>0$, and $v_{0}(c^{i})=(0,0)$. Note that the valuation of the clocks are assumed to be zero, as $x_{0}$ is on the boundary of the considered slice.

First, we show that the invariants of Corollary~\ref{cor:over-approx} and Theorem~\ref{prop:suf_cond_soundness} are the same for all $t\in[t_{0},t_{1}]$; secondly, we assume they are the same for all $t\in[t_{k-1},t_{k}]$. Finally, we show that they are the same for all $t\in[t_{k},t_{k+1}]$.

Base case: We show that for $t\in[t_{0},t_{1}]$ the guards and invariants in \eqref{eqn:inv_s1}-\eqref{eqn:gua_s2}, shown in \eqref{eqn:base_case_cor_eq} for the considered case, are equivalent to the guard and invariant generated by using Theorem~\ref{prop:suf_cond_soundness}. From \eqref{eqn:inv_s1}-\eqref{eqn:gua_s2} we know that for all $t\in[t_{0},t_{1}]$
\begin{subequations}\label{eqn:base_case_cor_eq}
\begin{align}
&(t-t_{0})\underline{\dot{\varphi}}_{g_{1}}\leq\Delta a\Leftrightarrow(t-t_{0})\leq \overline{t}_{g_{1}}\\
&(t-t_{0})\overline{\dot{\varphi}}_{g_{1}}\geq\Delta a\Leftrightarrow(t-t_{0})\geq \underline{t}_{g_{1}}.
\end{align}
\end{subequations}
The invariants and guards of the abstraction are $c^{i}_{1}\leq\overline{t}_{g_{1}}$ and $c^{i}_{2}\geq\underline{t}_{g_{1}}$ where the clock valuations for all $t\in[t_{0},t_{1}]$ are
\begin{subequations}\label{eqn:base_case_prop_eq}
\begin{align}
&v(c^{i}_{1})\leq\overline{t}_{g_{1}}\Leftrightarrow v_{0}(c^{i}_{1})+(t-t_{0})\leq\overline{t}_{g_{1}}\\
&v(c^{i}_{2})\geq\underline{t}_{g_{1}}\Leftrightarrow v_{0}(c^{i}_{2})+(t-t_{0})\geq\underline{t}_{g_{1}}.
\end{align}
\end{subequations}
It is seen that \eqref{eqn:base_case_cor_eq} and \eqref{eqn:base_case_prop_eq} are equivalent, as $v_{0}(c^{i})=(0,0)$.

Inductive step: Assume that the guards and invariants of Theorem~\ref{prop:suf_cond_soundness} are equivalent to the guards and invariants derived from Corollary~\ref{cor:over-approx} for all $t\in[t_{k-1},t_{k}]$ and that $k\in J^{+}$
\begin{subequations}
\begin{align}
&\notag v(c^{i}_{1})\leq \overline{t}_{g_{k}}\Leftrightarrow\\
& \sum_{j\in J^{+}\backslash\{k\}}\Delta_{j}(t_{j})\underline{\dot{\varphi}}_{g_{j}}-\sum_{j\in J^{-}\backslash\{k\}}\Delta_{j}(t_{j})\overline{\dot{\varphi}}_{g_{j}}
+\Delta_{k}(t)\underline{\dot{\varphi}}_{g_{k}}\leq\Delta a\label{eqn:pos_inv}\\
&\notag v(c^{i}_{2})\geq \underline{t}_{g_{k}}\Leftrightarrow\\
& \sum_{j\in J^{+}\backslash\{k\}}\Delta_{j}(t_{j})\overline{\dot{\varphi}}_{g_{j}}-\sum_{j\in J^{-}\backslash\{k\}}\Delta_{j}(t_{j})\underline{\dot{\varphi}}_{g_{j}}
+\Delta_{k}(t)\overline{\dot{\varphi}}_{g_{k}}\geq\Delta a
\end{align}
\end{subequations}
and if $k\in J^{-}$, then
\begin{subequations}
\begin{align}
&\notag v(c^{i}_{1})\leq \overline{t}_{g_{k}}\Leftrightarrow\\
& \sum_{j\in J^{+}\backslash\{k\}}\Delta_{j}(t_{j})\overline{\dot{\varphi}}_{g_{j}}-\sum_{j\in J^{-}\backslash\{k\}}\Delta_{j}(t_{j})\underline{\dot{\varphi}}_{g_{j}}
-\Delta_{k}(t)\underline{\dot{\varphi}}_{g_{k}}\geq0\\
&\notag v(c^{i}_{2})\geq \underline{t}_{g_{k}}\Leftrightarrow\\
&\sum_{j\in J^{+}\backslash\{k\}}\Delta_{j}(t_{j})\underline{\dot{\varphi}}_{g_{j}}-\sum_{j\in J^{-}\backslash\{k\}}\Delta_{j}(t_{j})\overline{\dot{\varphi}}_{g_{j}}
-\Delta_{k}(t)\overline{\dot{\varphi}}_{g_{k}}\leq0\label{eqn:neg_gua}
\end{align}
\end{subequations}

Now we show that the guards and invariants imposed by Theorem~\ref{prop:suf_cond_soundness} are still equivalent to those of Corollary~\ref{cor:over-approx} at $j=k+1$. To shorten the proof, we only show this in two of the possible cases (\underline{$\dot{\varphi}_{g_{k}}\dot{\varphi}_{g_{k+1}}>0$ and $\dot{\varphi}_{g_{k}}>0$}, $\dot{\varphi}_{g_{k}}\dot{\varphi}_{g_{k+1}}>0$ and $\dot{\varphi}_{g_{k}}<0$, \underline{$\dot{\varphi}_{g_{k}}\dot{\varphi}_{g_{k+1}}<0$ and $\dot{\varphi}_{g_{k}}>0$}, $\dot{\varphi}_{g_{k}}\dot{\varphi}_{g_{k+1}}<0$ and $\dot{\varphi}_{g_{k}}<0$).

Assume that $\dot{\varphi}_{g_{k}}\dot{\varphi}_{g_{k+1}}>0$ and $\dot{\varphi}_{g_{k}}>0$. Then the invariant assumed in Theorem~\ref{prop:suf_cond_soundness} becomes for all $t\in[t_{k},t_{k+1}]$
\begin{subequations}
\begin{align}
\frac{\overline{t}_{g_{k+1}}}{\overline{t}_{g_{k}}}v_{k}(c^{i}_{1})+\Delta_{k}(t)\leq\overline{t}_{g_{k+1}}\\
\frac{\underline{\dot{\varphi}}_{g_{k}}}{\underline{\dot{\varphi}}_{g_{k+1}}}v_{k}(c^{i}_{1})+\Delta_{k}(t)\leq\overline{t}_{g_{k+1}}\label{eqn:inv_1_suf}
\end{align}
\end{subequations}
We use \eqref{eqn:pos_inv} to obtain $v_{k}(c^{i}_{1})$ by dividing the inequality by $\underline{\dot{\varphi}}_{g_{k}}$
\begin{align}
v_{k}(c^{i}_{1})=
\sum_{j\in J^{+}}\Delta_{j}(t_{j})\frac{\underline{\dot{\varphi}}_{g_{j}}}{\underline{\dot{\varphi}}_{g_{k}}}-\sum_{j\in J^{-}}\Delta_{j}(t_{j})\frac{\overline{\dot{\varphi}}_{g_{j}}}{\underline{\dot{\varphi}}_{g_{k}}}\label{eqn:clock_valuation_1_suf}
\end{align}
Inserting \eqref{eqn:clock_valuation_1_suf} into \eqref{eqn:inv_1_suf} and multiplying it by $\underline{\dot{\varphi}}_{g_{k+1}}$ yields for all $t\in[t_{k},t_{k+1}]$ (Note that $\Delta a=\overline{t}_{g_{k+1}}\underline{\dot{\varphi}}_{g_{k+1}}$)
\begin{align}
&\sum_{j\in J^{+}}\Delta t_{j}\underline{\dot{\varphi}}_{g_{j}}-\sum_{j\in J^{-}}\Delta t_{j}\overline{\dot{\varphi}}_{g_{j}}
+\underline{\dot{\varphi}}_{g_{k+1}}\Delta_{k}(t)\leq\Delta a.
\end{align}
This equals \eqref{eqn:pos_inv} for $j=k+1$.

Now assume that $\dot{\varphi}_{g_{k}}\dot{\varphi}_{g_{k+1}}<0$ and $\dot{\varphi}_{g_{k}}>0$. The guard generated by the assumption of Theorem~\ref{prop:suf_cond_soundness} becomes for all $t\in[t_{k},t_{k+1}]$
\begin{subequations}
\begin{align}
\underline{t}_{g_{k+1}}-\frac{\underline{t}_{g_{k+1}}}{\overline{t}_{g_{k}}}v_{k}(c^{i}_{1})+\Delta_{k}(t)\geq\underline{t}_{g_{k+1}}\\
\frac{\Delta a}{\overline{\dot{\varphi}}_{g_{k+1}}}-\frac{\underline{\dot{\varphi}}_{g_{k}}}{\overline{\dot{\varphi}}_{g_{k+1}}}v_{k}(c^{i}_{1})+\Delta_{k}(t)\geq\underline{t}_{g_{k+1}}.\label{eqn:gua_2_suf}
\end{align}
\end{subequations}
We use \eqref{eqn:pos_inv} to obtain $v_{k}(c^{i}_{1})$ by dividing the inequality by $\underline{\dot{\varphi}}_{g_{k}}$
\begin{align}
v_{k}(c^{i}_{1})=
\sum_{j\in J^{+}}\Delta_{j}(t_{j})\frac{\underline{\dot{\varphi}}_{g_{j}}}{\underline{\dot{\varphi}}_{g_{k}}}-\sum_{j\in J^{-}}\Delta_{j}(t_{j})\frac{\overline{\dot{\varphi}}_{g_{j}}}{\underline{\dot{\varphi}}_{g_{k}}}.\label{eqn:clock_valuation_2_suf}
\end{align}
Inserting \eqref{eqn:clock_valuation_2_suf} into \eqref{eqn:gua_2_suf} and multiplying it with $\overline{\dot{\varphi}}_{g_{k+1}}$ yields for all $t\in[t_{k},t_{k+1}]$ (Note that $\Delta a=\underline{t}_{g_{k+1}}\overline{\dot{\varphi}}_{g_{k+1}}$)
\begin{subequations}
\begin{align}
&\Delta a-\sum_{j\in J^{+}}\Delta_{j}(t_{j})\underline{\dot{\varphi}}_{g_{j}}\tn{+}\sum_{j\in J^{-}}\Delta_{j}(t_{j})\overline{\dot{\varphi}}_{g_{j}}
\tn{+}\overline{\dot{\varphi}}_{g_{k+1}}\Delta_{k}(t)\geq\Delta a\\
&\sum_{j\in J^{+}}\Delta_{j}(t_{j})\underline{\dot{\varphi}}_{g_{j}}-\sum_{j\in J^{-}}\Delta_{j}(t_{j})\overline{\dot{\varphi}}_{g_{j}}-\overline{\dot{\varphi}}_{g_{k+1}}\Delta_{k}(t)\leq0
\end{align}
\end{subequations}
This equals \eqref{eqn:neg_gua} for $j=k+1$.
\end{proof}

\begin{remark}
The $ $ theorem $ $ provides $ $ the $ $ closest $ $ possible\\sound abstraction using only functions with constant derivatives, when \eqref{eqn:closest_sound} is satisfied. This follows from Figure~\ref{fig:lemma_principle}, where it is seen that the lower and upper approximations can actually be a tangent to the Lyapunov function in the beginning of a time interval. Hence, if the conditions are strengthened, the abstraction will no longer be a sound approximation of the control system.
\end{remark}

\section{Illustrative Example}\label{sec:results}

In this section, we apply the abstraction on a model of a unicycle given as
\begin{align}
\dot{x}&=
\begin{bmatrix}
0&1&0&0\\
0&0&0&0\\
0&0&0&1\\
0&0&0&0\end{bmatrix}x+
\begin{bmatrix}
0&0\\
1&0\\
0&0\\
0&1\end{bmatrix}u
\end{align}
where $x\in\mathds{R}^{4}$ and $u\in\mathds{R}^{2}$. The first coordinate $x_{1}$ is the position of the unicycle in the $x$-direction, $x_{2}$ is the velocity of the unicycle in the $x$-direction, $x_{3}$ is the position of the unicycle in the $y$-direction, and $x_{4}$ is the velocity of the unicycle in the $y$-direction. The inputs are the acceleration in the $x$-direction ($u_{1}$) and the acceleration in the $y$-direction ($u_{2}$).
The dynamics of this system is not complex contrary to the control objective is. The objective is to design a controller, ensuring that the system always reaches the goal set (green) from the initial set (blue), without hitting any of the obstacles (red), see Figure~\ref{fig:ill_ex}.
\begin{figure}[!htb]
    \centering
       \includegraphics[scale=1]{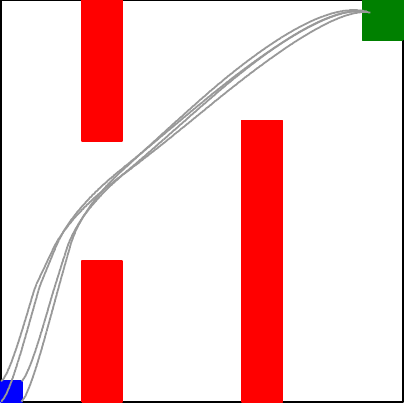}
    \caption{Track of the unicycle, given by initial set (blue), goal set (green), and obstacles (red).  The gray lines are trajectories of the robot, controlled using the proposed strategy.}
    \label{fig:ill_ex}
\end{figure}

The partition is conducted using four Lyapunov functions, which are not shown, as they are four-dimensional.

A control law, given by a switched controller, is generated from the abstraction, where the switching is determined by the cell that the system is in. From this example, it is concluded that it is possible to generate control strategies from the proposed abstraction. However, it is not automatically generated, as the update maps introduced in this paper are not implemented in currently available verification tools.

\section{Conclusion}\label{sec:conclusion}
In this paper, a method for abstracting control systems by timed games has been presented.
The method is based on partitioning the state space of the systems by set-differences of sets that are positive or negative invariant for all admissible controls. The timed games used in the abstraction have a more expressive update map than the update map usually allowed for timed game automata. This makes it possible to generate the proposed sound approximations, but has the consequence that no tools exists for automatic controller synthesis.

To enable synthesis of a control strategy that ensures safety of the system, conditions for soundness have been set up. These conditions tell when a sound abstraction can be realized from a partition of a state space and a partition of the control.
Finally, an example is provided to demonstrate that the formalism can be used to synthesize controllers that satisfy temporal specifications.

\bibliographystyle{abbrv}
\bibliography{bibliography}
\end{document}